% THIS IS SIGPROC-SP.TEX - VERSION 3.1
% WORKS WITH V3.2SP OF ACM_PROC_ARTICLE-SP.CLS
% APRIL 2009
%
% It is an example file showing how to use the 'acm_proc_article-sp.cls' V3.2SP
% LaTeX2e document class file for Conference Proceedings submissions.
% ----------------------------------------------------------------------------------------------------------------
% This .tex file (and associated .cls V3.2SP) *DOES NOT* produce:
%       1) The Permission Statement
%       2) The Conference (location) Info information
%       3) The Copyright Line with ACM data
%       4) Page numbering
% ---------------------------------------------------------------------------------------------------------------
% It is an example which *does* use the .bib file (from which the .bbl file
% is produced).
% REMEMBER HOWEVER: After having produced the .bbl file,
% and prior to final submission,
% you need to 'insert'  your .bbl file into your source .tex file so as to provide
% ONE 'self-contained' source file.
%
% Questions regarding SIGS should be sent to
% Adrienne Griscti ---> griscti@acm.org
%
% Questions/suggestions regarding the guidelines, .tex and .cls files, etc. to
% Gerald Murray ---> murray@hq.acm.org
%
% For tracking purposes - this is V3.1SP - APRIL 2009

\documentclass{edm_template}

\usepackage{hyperref}
\usepackage{nameref}
\usepackage{url}
\usepackage{algorithm}
\usepackage[noend]{algpseudocode}
\usepackage{subfigure}
\usepackage{amsmath}
\interdisplaylinepenalty=2500

\usepackage{amsthm}
\usepackage{dcolumn}
\usepackage{graphicx}
\usepackage{algorithmicx}
\usepackage{algpseudocode}% http://ctan.org/pkg/algorithmicx
\usepackage{booktabs,siunitx}
\usepackage{amsfonts}

\usepackage{topcapt}
\usepackage{booktabs}
\usepackage{flushend}

\usepackage{graphicx}
\usepackage{caption}
\usepackage{setspace}
\usepackage{lipsum}
\usepackage{multicol}

\begin{document}

\title{The Guided Team-Partitioning Problem: Definition, Complexity, and Algorithm\vspace*{-10cm}}

\algdef{SE}[SUBALG]{Indent}{EndIndent}{}{\algorithmicend\ }%
\algtext*{Indent}
\algtext*{EndIndent}

%% Lemma, Theorem, Definiton
\newtheorem{problem}{Problem}
\newtheorem{definition}{Definition}
\newtheorem{observation}{Observation}
\newtheorem{lemma}{Lemma}
\newtheorem{corollary}{Corollary}

\newcommand{\squishlist}{\begin{list}{$\bullet$}
  { \setlength{\itemsep}{0pt}
     \setlength{\parsep}{3pt}
     \setlength{\topsep}{3pt}
     \setlength{\partopsep}{0pt}
     \setlength{\leftmargin}{1.5em}
     \setlength{\labelwidth}{1em}
     \setlength{\labelsep}{0.5em} } }
\newcommand{\squishend}{
  \end{list}  }
  
%% Problem Formulation
\newcommand{\inpSet}{{\ensuremath{\mathcal{R}}}}  %input dataset
\newcommand{\cluster}{{\ensuremath{\mathcal{C}}}} 
\newcommand{\inpElem}{{\ensuremath{\mathbf{r}}}} % elements of input dataset
\newcommand{\noRemove}{{\ensuremath{\mathbf{\ell}}}}  % number of elements we want to remove
\newcommand{\noTarget}{{\ensuremath{\mathbf{t}}}} % number of given target cluster reps

\newcommand{\std}{{\ensuremath{\mathbf{\sigma}}}}

\newcommand{\pointInCluster}{{\ensuremath{m}}}
\newcommand{\target}{{\ensuremath{\mathbf{t }}}}  %target vectors (or values)
\newcommand{\targetSet}{{\ensuremath{\mathbf{T }}}}  %set of target vectors (or values)
\newcommand{\calP}{{\ensuremath{\mathcal{P}}}} %obtained partisions after clustering
\newcommand{\subsetS}{{\ensuremath{\mathbf{S}}}}
\newcommand{\distanceFunction}{{\ensuremath{\mathbf{D}}}}
\newcommand{\mean}{{\ensuremath{\mathbf{mean}}}}

\newcommand{\noPoints}{{\ensuremath{n}}}
\newcommand{\noCluster}{{\ensuremath{k}}} % number of desired clusters to obtain
\newcommand{\dimension}{{\ensuremath{d}}}

%% datasets  and experiments
\newcommand{\DSskills}{{\tt Skills}}
\newcommand{\bia}{{\tt BIA}}
\newcommand{\guru}{{\tt Guru}}
\newcommand{\freelancer}{{\tt Freelancer}}
\newcommand{\synth}{{\tt Synthetic}}

\newcommand{\rep}{{\ensuremath{\mathbf{rep}}}}

\newcommand{\clusterCenters}{{\ensuremath{\mathbf{C}}}}
\newcommand{\clusterCenter}{{\ensuremath{\mathbf{c}}}}

%% algorithm and problem names
\newcommand{\LRemoval}{{\sc CIS}}  %l removal problem, selecting only one subset such that mean is close to target
\newcommand{\targetClustering}{{\sc CP}}  % target clustering problem, clustering data such that we have k clusters and obtained cluster centers are close to given target vectors
\newcommand{\targetLClustering}{{\sc Guided Team-Partitioning}}  %target clustering problem, clustering data such that we have k clusters and obtained cluster centers are close to given target vectors and we can remove up to l point
\newcommand{\greedy}{{\sc Greedy}} % greedy algorithm to solve LRemoval problem
\newcommand{\finalAlgo}{{\sc GuidedSplit}}
\newcommand{\btfCvx}{{\sc BTF\textunderscore CVX}}
\newcommand{\btfGreedy}{{\sc BTF\textunderscore Greedy}}
\newcommand{\assignment}{{\sc Assignment problem}}
\newcommand{\randomPartitioning}{{\sc Random}}
\newcommand{\kmeans} {{\sc k\_means}}
\newcommand{\kmeanst} {{\sc k\_targets}}
\newcommand{\kmeansmin} {{\sc k\_means-{}-}}
\newcommand{\maxBenefit} {{\sc Max Benefit}}
\newcommand{\knn} {{\sc kNN}}
\newcommand{\cvx} {{\sc ConvexOpt}}

\newcommand{\calX}{{\ensuremath{\mathcal{X }}}}
\newcommand{\calY}{{\ensuremath{\mathcal{Y }}}}
\newcommand{\calU}{{\ensuremath{\mathcal{U }}}}
\newcommand{\calS}{{\ensuremath{\mathcal{S }}}}
\newcommand{\calH}{{\ensuremath{\mathcal{H }}}}
\newcommand{\calC}{{\ensuremath{\mathcal{C }}}}

\newcommand{\calA}{{\ensuremath{\mathcal{A}}}}
\newcommand{\centers}{{\ensuremath{\mathcal{C}}}}

\newcommand{\hatC}{{\ensuremath{\widehat{C}}}}

\newcommand{\ttu}{{\mathrm{\tt{u}}}}
\newcommand{\hatsigma}{{\ensuremath{\widehat{\sigma}}}}
\newcommand{\hatd}{{\ensuremath{\widehat{d}}}}
\newcommand{\bM}{{\ensuremath{\mathbf{M}}}}
\newcommand{\br}{{\ensuremath{\mathbf{r}}}}
\newcommand{\bB}{{\ensuremath{\mathbf{B}}}}
\newcommand{\bb}{{\ensuremath{\mathbf{b}}}}
\newcommand{\bm}{{\ensuremath{\mathbf{m}}}}
\newcommand{\bS}{{\ensuremath{\mathbf{S}}}}
\newcommand{\bT}{{\ensuremath{\mathbf{T}}}}
\newcommand{\bR}{{\ensuremath{\mathbf{R}}}}
\newcommand{\benefit}{{\ensuremath{\mathbf{b}}}}
\newcommand{\singschedule}{{\ensuremath{\mathbf{R}}}}
\newcommand{\distschedule}{{\ensuremath{\mathbf{D}}}}

\newcommand{\numt}{{\ensuremath{\#\mathbf{t}}}}
\newcommand{\numtp}{{\ensuremath{\#\mathbf{t^P}}}}
\newcommand{\numtpd}{{\ensuremath{\#\mathbf{t^P_d}}}}
\newcommand{\numtpi}{{\ensuremath{\#\mathbf{t^P_i}}}}
\newcommand{\numts}{{\ensuremath{\#\mathbf{t^s}}}}
\newcommand{\numtsi}{{\ensuremath{\#\mathbf{t^s_i}}}}
\newcommand{\numtsd}{{\ensuremath{\#\mathbf{t^s_d}}}}
\newcommand{\numtps}{{\ensuremath{\#\mathbf{t^{P,s}}}}}
\newcommand{\numtpsd}{{\ensuremath{\#\mathbf{t^{P,s}_d}}}}
\newcommand{\numtpsi}{{\ensuremath{\#\mathbf{t^{P,s_i}}}}}

\newcommand{\pos}{{\tt pos}}
\newcommand{\I}{{\tt I}}

\newcommand{\req}{{\tt req}}

\newcommand{\dU}{{\ensuremath{\mathbf{\Delta U}}}}

\newcommand*{\argminl}{\argmin\limits}
\newcommand*{\argmaxl}{\argmax\limits}

\newcommand{\singlet}{{\tt Single\_t}}
\newcommand{\group}{{\tt Group}}
\newcommand{\groupt}{{\tt Group\_t}}
\newcommand{\evalp}{{\tt Evaluate\_P}}
\newcommand{\evalpr}{{\tt Evaluate\_Pr}}
\newcommand{\evalpl}{{\tt Evaluate\_PL}}
\newcommand{\greedyt}{{\tt Greedy\_t}}
\newcommand{\partdp}{{\tt Partition\_dp}}
\newcommand{\partdpl}{{\tt Partition\_dpl}}
\newcommand{\parte}{{\tt Partition\_e}}
\newcommand{\select}{{\tt Select}}
\newcommand{\schedule}{{\tt Schedule}}
\newcommand{\kmins}{{\tt K-Mins}}
\newcommand{\kminsm}{{\tt K-Mins-{}-}}
\newcommand{\kmeanspp}{{\tt K-Means+{}+}}

%---------------------------------------

\newcommand{\asdata}{{\tt AS}}
\newcommand{\wikivote}{{\tt WikiVote}}
\newcommand{\dblp}{{\tt DBLP}}

%% editing macros
\newcommand{\spara}[1]{\smallskip\noindent{\bf{#1}}}
\newcommand{\mpara}[1]{\medskip\noindent{\bf{#1}}}
\newcommand{\bpara}[1]{\bigskip\noindent{\bf{#1}}}

\title{The Guided TeamPartitioning Problem: Definition, Complexity, and Algorithm
\titlenote{(Does NOT produce the permission block, copyright information nor page numbering). For use with ACM\_PROC\_ARTICLE-SP.CLS. Supported by ACM.}}

\numberofauthors{3} %  in this sample file, there are a *total*
% of EIGHT authors. SIX appear on the 'first-page' (for formatting
% reasons) and the remaining two appear in the \additionalauthors section.
%
\author{
% You can go ahead and credit any number of authors here,
% e.g. one 'row of three' or two rows (consisting of one row of three
% and a second row of one, two or three).
%
% The command \alignauthor (no curly braces needed) should
% precede each author name, affiliation/snail-mail address and
% e-mail address. Additionally, tag each line of
% affiliation/address with \affaddr, and tag the
% e-mail address with \email.
%
% 1st. author
\alignauthor
Sanaz Bahargam\\
       \affaddr{Boston university}\\
       \email{bahargam@bu.edu}
% 2nd. author
\alignauthor
Theodoros Lappas\\
       \affaddr{Stevens Institute of Technology}\\      
       \email{tlappas@stevens.edu}
\alignauthor Evimaria Terzi\\
       \affaddr{Boston University}\\
       \email{evimaria@cs.bu.edu}
}
% There's nothing stopping you putting the seventh, eighth, etc.
% author on the opening page (as the 'third row') but we ask,
% for aesthetic reasons that you place these 'additional authors'
% in the \additional authors block, viz.
\additionalauthors{Additional authors: John Smith (The Th{\o}rv{\"a}ld Group,
email: {\texttt{jsmith@affiliation.org}}) and Julius P.~Kumquat
(The Kumquat Consortium, email: {\texttt{jpkumquat@consortium.net}}).}
\date{30 July 1999}
% Just remember to make sure that the TOTAL number of authors
% is the number that will appear on the first page PLUS the
% number that will appear in the \additionalauthors section.

\maketitle
\vspace*{-3.5em}
\begin{abstract}
A long line of literature has focused on the problem of selecting a team of individuals from a large pool of
candidates, such that certain constraints are respected, and
a given objective function is maximized. Even
though extant research has successfully considered diverse
families of objective functions and constraints, one of the
most common limitations is the focus on the single-team
paradigm. Despite its well-documented applications in multiple
domains, this paradigm is not appropriate when the
team-builder needs to partition the entire population into
multiple teams. Team-partitioning tasks are very common in
an educational setting, in which the teacher has to partition
the students in her class into teams for collaborative
projects. The task also emerges in the context of organizations,
when managers need to partition the
workforce into teams with specific properties to tackle relevant projects. In this work, we extend
the team formation literature by introducing the Guided
Team-Partitioning (GTP) problem, which asks for the partitioning of a population
into teams such that the centroid of each team is as close
as possible to a given target vector.  As we describe in detail in our work, this formulation allows the team-builder to control the composition of the produced 
teams and has natural applications in practical settings. Algorithms for the GTP need to 
simultaneously consider the composition
of multiple non-overlapping teams that compete for
the same population of candidates. This makes the problem
considerably more challenging than formulations that focus
on the optimization of a single team. In fact, we prove that
GTP is NP-hard to solve and even to approximate. The
complexity of the problem motivates us to consider efficient
algorithmic heuristics, which we evaluate via experiments
on both real and synthetic datasets.
\end{abstract}

%% A category with the (minimum) three required fields
%\category{H.4}{Information Systems Applications}{Miscellaneous}
%%A category including the fourth, optional field follows...
%\category{D.2.8}{Software Engineering}{Metrics}[complexity measures, performance measures]
%
%\terms{Theory}

\keywords{Team Formation, Partitioning} % NOT required for Proceedings

The input of the  general team formation problem consists of a pool of candidates, a set of constraints, and an objective function. The goal is then to strategically select 
a group of individuals from the pool, such that the group respects all the constraints and also optimizes the objective function. A long line of literature has addressed different versions of the problem
that ask for the optimization of functions such as the quality of intra-team communication~\cite{Lappas:2009}.
% the ...  \hl{ADD AS MANY AS YOU CAN}. 
Previous work has also considered a diverse array of constraints, such as those on the team's size~\cite{walter2016minimizing}, the team-builder's recruitment budget~\cite{golshan2014profit}, the coverage of a particular set of skills~\cite{tang2016profit}, and the workload allocated to each team member~\cite{hendriks1999human}.

In this paper, we extend the team formation literature by introducing an alternative formulation of the problem that asks for the partitioning of the given pool of candidates into \emph{multiple} teams, while
allowing the team builder to control the properties of each team. We refer to this paradigm as \emph{Guided Team-Partitioning} (GTP). For instance, consider a teacher who is trying to partition the students in her class into groups, such that the distribution of talent and experience across the groups is balanced~\cite{shaw2004fair}. We illustrate an example of this scenario in Figure~\ref{fig:intro}. 
\begin{figure}[htb]
\centering
\includegraphics[scale=0.15]{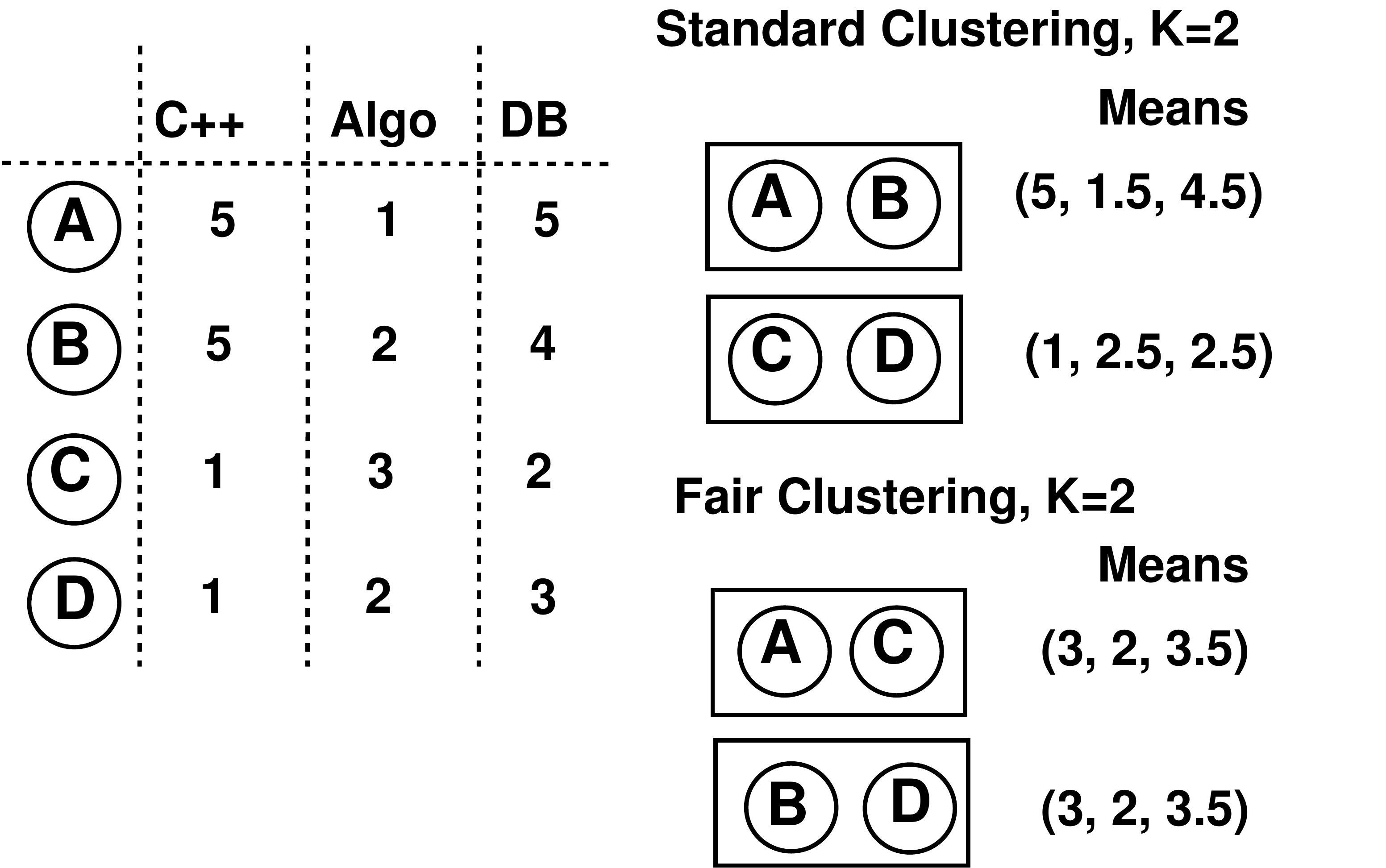}
  \vspace{-9pt}
    \caption{Standard Vs. Fair Clustering}
      \vspace{-9pt}
    \label{fig:intro}
\end{figure}
In this example, the input consists of a population of $4$ students $\{A, B, C, D\}$. Each student has a proficiency level for three skills: C++, Algorithms, and Databases. Proficiency is measured on a 1-5 scale, with higher values indicating higher proficiency. The goal is to partition the student population into balanced teams, such as each team has an average proficiency level of $3$ for all three skills. Conceptually, our goal
is to avoid teams with an unfair advantage or disadvantage. We refer to $(3, 3, 3)$ as the \emph{target vector}, which is used to guide the partitioning task.
One approach would be to use a similarity-based clustering algorithm, 
such as K-means. The best solution would then be to create two clusters $\{A, B\}$ and $\{C, D\}$. The students in the first team would then be highly similar to each other, being experts in C++ and Databases but novices with respect to Algorithms. On the other hand, both $C$ and $D$ have very little knowledge of C++ and are similarly mediocre with respect to the other two skills. Therefore, it is clear that this partitioning 
fails to approximate the target vector and does  not deliver a balanced distribution of talent. If we consider all possible team assignments, it is clear that the best possible option is to create two teams $\{A, C\}$ and $\{B, D\}$ with both having (3, 2, 3.5) as their centroids.  The two identical centroids clearly demonstrate the fairness of this partitioning: both teams have an average proficiency of $3, 2$, and $3.5$ for C++, Algorithms, and Databases, respectively. In addition, the $(3, 2, 3.5)$ vector is the closest possible to the ideal $(3,3,3)$ target vector.

This first example captures the scenario in which the centroids of all the teams have to be as close as possible to the same target vector.
What if the team builder wants to go beyond this simple balanced partitioning and actually ask for \emph{different} centroid values for the produced teams? Consider the example shown in Figure~\ref{fig:intro2}.

\begin{figure}[htb]
\centering
\includegraphics[scale=0.12]{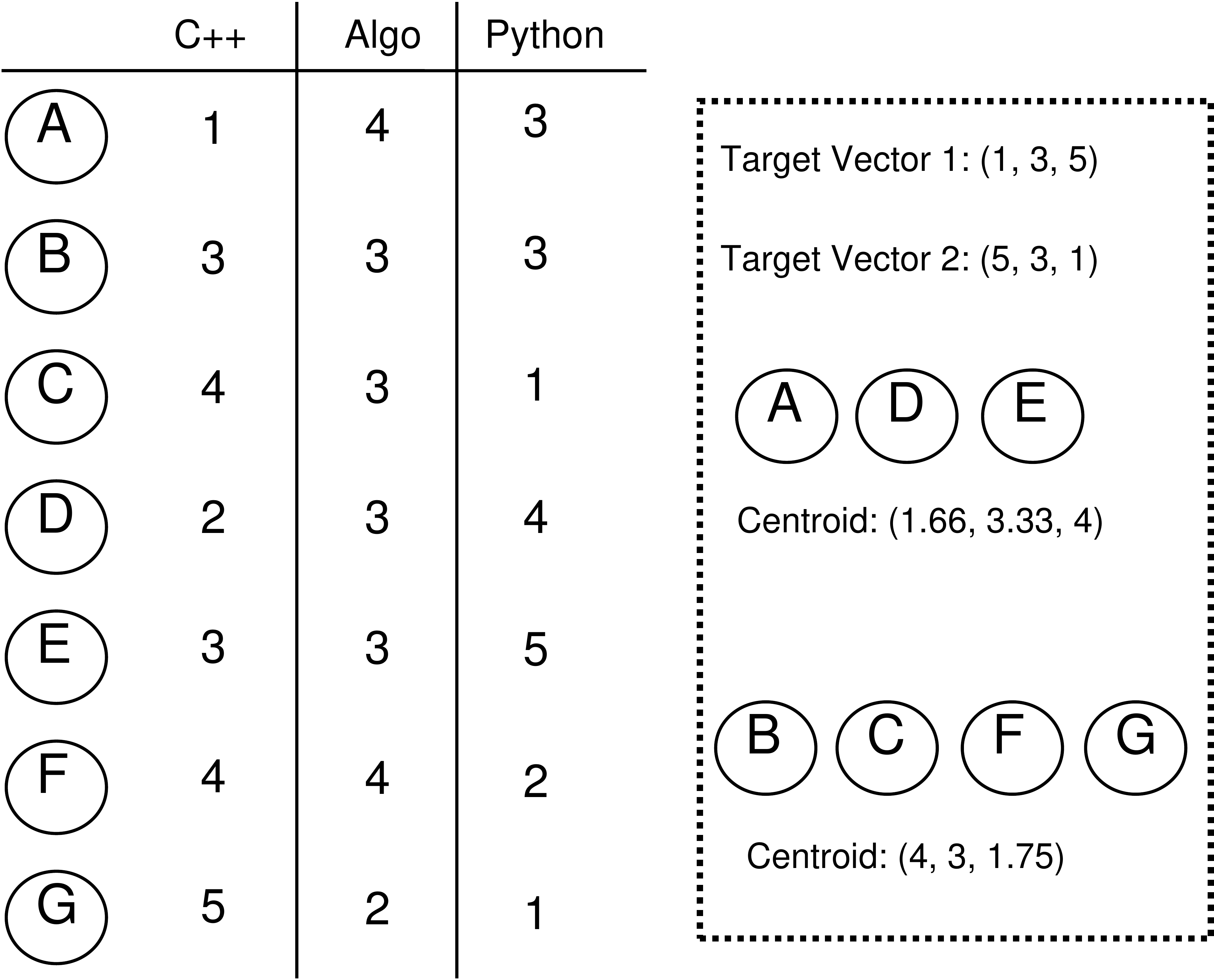}
  \vspace{-9pt}
    \caption{Guided partitioning of a population into teams.}
      \vspace{-9pt}
    \label{fig:intro2}
\end{figure}

The figure shows a population of 7 students. For each student, we know their proficiency level with respect to C++, Algorithms, and Python, measured on a 1-5 scale. The teacher wants to compare
C++ and Python in the context of collaborative projects. In order to achieve this, she needs to partition her 7 students into two teams, such that one team has a high concentration in C++ and a low
concentration in Python, while the other team has the exact opposite properties. In addition, both teams need to have a medium concentration in Algorithms, in order to control for the effect of this 
skill~\footnote{In a more general example, the teacher would ask for a medium level for all available properties (e.g. demographics, proficiency in other relevant skills) except Python and C++}. 
In the context of the (C++, Algorithms, Python) feature space, the teacher thus specifies the target vectors $(1, 3, 5)$ and $(5, 3, 1)$. An exhaustive evaluation of all possible solutions would
reveal that the best partitioning would be $(A, D, E)$ and $(B, C, F, G)$. As we show in the picture, the centroids of these two teams are as close to the specified target vectors as possible. 
%We formalize our framework according to the more general formulation of this second example, which allows the team-builder to specify a different target vector for each team. 
This second example demonstrates the main problem that we focus on in our work: 

\spara{\tt Main Problem:}
\emph{Given a population of individuals, we want to partition the population into teams such that the centroid of each team is as close as possible to a specific target vector.}
\ \\\\
In an educational setting, an instructor could use this type of partitioning to create customized training environments by placing students in teams with specific strengths and weaknesses~\cite{bailey2005teaching,razzouk2003learning}. In general, this type of team partitioning is essential for a team builder who is trying to form and study groups with specific compositions for marketing or experimental purposes~\cite{baugh1997effects}. We observe that, even for the toy examples that we presented above, finding the best possible partitioning is a non-trivial task. In addition, the task obviously becomes even more challenging in realistic scenarios with large numbers of candidates and features. In fact, as we show in our work, the problem is NP-hard to solve and even to approximate.

While there exist team formation problems that try to form good teams to cover some skills while minimizing the cost, to the best of our knowledge, our work is the first to tackle this customized team formation problem. Nonetheless, our work has ties to extant research on team formation and other problems, which we review in
Section~\ref{sec:related}. In Section~\ref{sec:problem}, we explore alternative formal definitions and study their hardness. We then present
an efficient algorithmic framework in Section~\ref{sec:algo}. In Section~\ref{sec:experiments}, we evaluate the efficacy of our framework via an experimental evaluation that includes both
real and synthetic data, as well as competitive baselines. Finally, we conclude our work in Section~\ref{sec:conclusion}.

\section{Related Work}\label{sec:related}
Although to best of our knowledge, we are the first to introduce the \targetLClustering\ Problem, the nature of our problem is related to semi-supervised clustering  and team formation problem. We review only some of these works here:

\spara{Semi-supervised Clustering}
Existing methods for semi-supervised clustering fall into
two general approaches constraint-based and
metric-based. In constraint-based approaches, the clustering
algorithm itself is modified to lead the algorithm
towards a more appropriate data partitioning. In order to achieve this, pairwise constraints or user-provided labels
are used in the algorithm. This is done by initializing and constraining clustering
based on labeled examples ~\cite{Basu02semi-supervisedclustering}, modifying objective function to include constraints~\cite{demiriz1999semi}, or enforcing constraints during the clustering process~\cite{wagstaff2001constrained}. In semi-supervised clustering by seeding~\cite{Basu02semi-supervisedclustering}, besides input dataset \inpSet\ and the number of clusters $k$, given is a subset of the dataset, $S$, in which the cluster they should belong is specified. For all $k$ clusters, there is at least one point in $S$. 
One of the earliest problems which is close to constraint-based clustering is  facility location problem~\cite{Shmoys:1997:AAF}  and it is studied mainly in operational research science. It tries to locate $k$ facilities to serve $n$ customers such that the travelling distance from the customers to their facility is minimized. However, the only type of constraints they studied is constraints on the capacity of the facility, i.e., each facility can only serve a limited number of customers. 
In the work by Bradley et al. \cite{Bradley00constrainedk-means},  $k$ constraints are added to the underlying clustering optimization problem requiring that each cluster has at least a minimum number of points in it. 
Tung et al.~\cite{Tung00constraint-basedclustering} introduces a framework for constraint-based clustering. Their taxonomy of constraints includes constraints on individual objects (e.g. cluster only luxury mansions of value over one million dollars), parameter constraints (e.g. the number of clusters) and constraints on individual clusters that can be described in terms of bounds on aggregate functions (min, avg, etc.).
On the contrary, in metric-based approaches, an existent clustering algorithm
that uses a distance metric is used; however, the metric
is first trained to satisfy the labels or constraints in the given
supervised data. Various distance measures have been used
for metric-based approaches including but not limited Euclidean
distance trained by a shortest-path algorithm \cite{klein2002instance}, Jensen-Shannon divergence trained using gradient descent~\cite{cohn2003semi}, string-edit distance learned using Expectation
Maximization~\cite{Bilenko:2003},  or Mahalanobis distances trained using convex optimization~\cite{bar2003learning}.

\spara{Team Formation} Team formation has been studied in operations research community e.g. ~\cite{Baykasoglu:2007}, which defines the problem as finding the optimal match between  people and demanded functional requirements. It is often solved using techniques such as simulated annealing, branch-and-cut or genetic algorithms e.g. ~\cite{Baykasoglu:2007}. 
Lately it has also been studied in computer science. 
%~\cite{Lappas:2009,bahargam2015personalized,Anagnostopoulos:2010,Kargar:2011,Eftekhar:2015,agrawal2014forming,BasuRoy:2015,wang2016ustf}
%Li2012CAG}. 
A majority of these work focus on team formation to complete a task and minimize the communication cost among team members~\cite{BAHARGAM2019441}. The focus of these studies is to find only one team to  perform a given task. Lappas et al. ~\cite{Lappas:2009} introduces the problem of team formation in the context of social networks. Given a pool of experts and a set of skills that needed to be covered, the goal there is to select a team of experts that can collectively cover all the required skills while ensuring efficient intra- team communication. Their work  imposes the strong assumption that a single person can fulfill a skill requirement of a task. Whereas a general framework can impose the constraint such that the at least $n$ experts should fulfill the requirements \cite{pragarauskas2012multi}.
Bhowmik et al.  \cite{bhowmik2014submodularity} developed algorithms using submodularity to find teams of experts by relaxing the skill cover requirement such that some skills must be necessarily covered by experts while other skills only improve the team quality. 
The work by Rangapuram et al. \cite{Rangapuram:2013} also studies the problem of finding a team of experts based on densest subgraph  that is both competent in performing the given task and compatible in working together. 
%\cite{awal2014team} tries to identify a set of experts that can collaborate effectively to complete a given task.  
The recent work by Anagnostopoulos et al. \cite{anagnostopoulos2012online} considers a time-series of arriving tasks whereby users are chosen to finish the arriving tasks without overwhelming any team or any expert, and the team has small communication overhead. A similar study by Kargar et al. \cite{kargar2013finding} considers the problem of finding an affordable and collaborative team from an expert network that minimizes two objectives at the same time: the communication cost among team members and also the personnel cost.
Recently,   complex task crowdsourcing by team formation has been studied, where the requester wishes to hire a group of workers to work together as a team \cite{wang2016truthful}. 
In educational settings clustering students into different teams is studied such that students can maximally benefit from peer interaction \cite{bahargam2015personalized, DBLP:journals/corr/BahargamEBT17}.   In addition, the work by Agrawal et al. \cite{Agrawal:2014} considers partitioning students in which each student has only one ability level for all the activities, and each team has a set of leaders and followers in which leaders are helping followers to complete a task. 
The goal is to maximize the gain of students where the gain is defined as  the number of students who can do better by interacting with the higher ability students. 

%There is also  literature discussing the structure of teams and the influence of team structure on team performance ~\cite{}

%Although all these works focus on identifying good teams, they are different from the work we present here,  as most of them only focus on finding only one team \cite{}, focusing on binary skills \cite{}, or only considering one ability level \cite{} or they consider different objective function \cite{} .as we introduce a new optimization function; forming customized teams by placing individuals in teams with specific strengths and weaknesses. 
Although all these works focus on identifying good teams, they are different from the work we present here,  as most of them focus on finding only one team, or only allow for binary skills, or taking into account only one ability level,  or they study very  different objective function. In this paper, we introduce a new different  optimization function; forming customized teams by placing individuals in teams with specific strengths and weaknesses. 

\section{Problem Definition}\label{sec:problem}
In this section, we begin by describing notational conventions that we will use throughout the paper; then we present the formal statement of the
problems that we study. We start from a simple version of our problem with only one team (partition). We show that even the simple version is NP-hard to solve and approximate. Then we move to partitioning problem  with desired centroids.  In Subsection~\ref{subsection:final_prob}, we describe our problem, \targetLClustering. In the same subsection, we show that  our problem is NP-hard to solve and approximate.

\subsection{Preliminaries}
We consider a pool {\inpSet} of \noPoints\ individual experts. Each expert $\inpElem \in \inpSet$ is associated with a $\dimension$-dimensional feature (skill) vector $\inpElem_i$, such that $\inpElem_i(f)$ returns the value of skill $f$ for expert $i$. We also consider given a set of target vectors \targetSet\ of \noCluster\ target points \target. The goal is to partition the pool of experts into \noCluster\ teams such that the cost of this partitioning is minimized. The cost for each team $\cluster_i$,  is the distance between the centroid of that team (\mean($\cluster_i$))  to its target vector  $\target_i$.
\begin{equation}\label{eq:cost}
Cost = \sum\limits_{i=1}^\noCluster \distanceFunction( \mean(\cluster_{i}), \target_{i} )
\end{equation}

We quantify the closeness between two vectors $\tau$ and $t$  of dimension $d$, using the $L^{2}_{2}$ norm of their difference. We denote this by
$$ \distanceFunction(\tau, t ) := L^{2}_{2} (\tau - t ) = \sum\limits_{i=1}^d  (\tau(i) - τt(i))^{2}$$

We define the \mean\ of a set \inpSet, consisting of $n$ vectors $\inpElem_1, \inpElem_2, \ldots, \inpElem_n$  as  $$\mean(\inpSet) = \frac{\sum_{i=1}^{n} \inpElem_i}{n}$$

\iffalse
\note[SB]{ Can it also work for Manhattan or Jaccard distance or any metrics? Cosine distance?} 
\fi 

%\subsection{}
%Note that the problem we are going to introduce are not clustering problems. They are ..

\subsection{The \LRemoval\  Problem}
In Characteristic-Item Selection (\LRemoval) problem, the goal is to find a subset of a universe set, such that
the mean of the subset is as close as possible to a designated point. This designated point represents the whole set. This intuition is captured in a formal definition as follows. 

\begin{problem}[\LRemoval]\label{prob:groupschedule}
Given a set $\inpSet= \lbrace \inpElem_{1}, \inpElem_{2}, \ldots, \inpElem_{\noPoints} \rbrace$, a number $\noRemove$, and a designated vector $\target$, find \noRemove\ points ($\inpSet_\noRemove$) to remove from \inpSet, such that  \mean\ of the remaining points $\inpSet \setminus \inpSet_{\noRemove}$ is close to target \target. More formally,
%$$||mean(\inpSet \setminus \inpSet_{\noRemove}), \target ||_{2}^{2}$$
$$\distanceFunction(\mean(\inpSet \setminus \inpSet_{\noRemove}), \target )$$
is minimized.
\end{problem}

\begin{lemma}
The \LRemoval\ problem is NP-hard to solve and approximate.
\end{lemma}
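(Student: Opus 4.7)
My plan is to reduce from the fixed-cardinality version of \textsc{Subset Sum}, which I will call \textsc{$k$-Subset Sum}: given positive integers $a_1,\ldots,a_n$, a target integer $T$, and an integer $k$, decide whether some $k$ of the $a_i$ sum to exactly $T$. This variant is NP-hard by a standard padding reduction from ordinary \textsc{Subset Sum} (append $n$ zero-valued elements and require cardinality $n$), so I will invoke it as a known hard problem.

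Given such an instance $(a_1,\ldots,a_n,T,k)$, I will build a one-dimensional \LRemoval\ instance by taking $\inpSet=\{\inpElem_1,\ldots,\inpElem_n\}$ with $\inpElem_i=a_i\in\mathbb{R}$, removal budget $\noRemove = n-k$, and target $\target = T/k$. Using the squared $L_2$ distance specified earlier, the cost of any feasible removal set $\inpSet_{\noRemove}$ is
\[
\distanceFunction\!\left(\mean(\inpSet\setminus\inpSet_{\noRemove}),\target\right) \;=\; \left(\frac{1}{k}\sum_{i\notin \inpSet_{\noRemove}} a_i \;-\; \frac{T}{k}\right)^{\!2},
\]
which vanishes precisely when the retained $k$-element subset of the $a_i$ sums to $T$. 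Hence the optimum \LRemoval\ cost is $0$ if and only if the \textsc{$k$-Subset Sum} instance is a yes-instance, which establishes NP-hardness.

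For the inapproximability statement I will exploit the same zero-gap produced by this reduction. Since the constructed instances have optimum cost either $0$ or strictly positive, any hypothetical polynomial-time multiplicative $\alpha$-approximation (for any finite $\alpha\ge 1$) would be forced to return a solution of cost at most $\alpha\cdot 0 = 0$ on every yes-instance of \textsc{$k$-Subset Sum}, and could therefore decide that NP-hard problem in polynomial time. Hence, unless $\mathrm{P}=\mathrm{NP}$, no multiplicative approximation of \LRemoval\ exists.

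The main obstacle I anticipate is matching \LRemoval's rigid cardinality constraint with a source of NP-hardness whose witness also has a fixed cardinality; this is precisely why I reduce from \textsc{$k$-Subset Sum} rather than from unrestricted \textsc{Subset Sum}. Once that alignment is set up, both the NP-hardness and the inapproximability claim drop out of the same one-dimensional construction via the standard gap-preservation argument.
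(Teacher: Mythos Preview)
Your proposal is correct and essentially coincides with the paper's own argument: the paper also reduces from the fixed-cardinality version of \textsc{Subset Sum} (it simply calls it \textsc{Subset Sum}) by setting $\inpSet$ to the input numbers, $\noRemove = n-j$, and $\target = J/j$, and then derives inapproximability from the same zero-optimum gap argument you give. The only cosmetic difference is that you explicitly justify the NP-hardness of the cardinality-constrained variant via padding, whereas the paper takes it as given.
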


%We refer the reader to the supplementary material for the proof. 
Our analysis below demonstrates that the \LRemoval\ problem is not only NP-hard, but it is also NP-hard to approximate. In order to see this, let's consider the decision version of the problem which is defined as follows: Given a set $\inpSet= \lbrace \inpElem_{1}, \inpElem_{2}, \ldots, \inpElem_{\noPoints} \rbrace$, a number \noRemove, and a designated vector $\target$, does there exist a subset  $\inpSet_\noRemove \subseteq \inpSet$ with $|\inpSet_\noRemove| = \noRemove$ and $\distanceFunction(\mean (\inpSet \setminus \inpSet_\noRemove ), \target) \leq q$? We call this decision version of the problem Decision-\LRemoval. The following lemma shows that this problem is NP-complete.

\begin{lemma}
\label{l_removal_np}
The Decision-\LRemoval\ problem is NP-complete.
\end{lemma}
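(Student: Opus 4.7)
The plan is to establish NP-completeness in the standard two-step fashion: first exhibit a polynomial certificate showing the problem lies in NP, then give a polynomial-time many-one reduction from a known NP-hard problem.

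For containment in NP, the natural certificate is the removal set $\inpSet_\noRemove$ itself. Verification is immediate: check that $|\inpSet_\noRemove| = \noRemove$, compute $\mean(\inpSet \setminus \inpSet_\noRemove)$ by summing the surviving $\noPoints - \noRemove$ vectors and dividing coordinate-wise, evaluate $\distanceFunction(\mean(\inpSet \setminus \inpSet_\noRemove), \target)$, and compare the result against $q$. Each step is polynomial in the encoding length.

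For hardness, I would reduce from SUBSET SUM: given positive integers $a_1, \ldots, a_m$ and a target $T$, decide whether some subset sums to exactly $T$. The naive encoding maps each $a_i$ to the one-dimensional vector $\inpElem_i = a_i$ and picks $\target$ so that a good kept set corresponds to a valid SUBSET SUM subset. The snag is that SUBSET SUM places no constraint on the size of the chosen subset, whereas \LRemoval\ rigidly fixes the size of the kept set at $\noPoints - \noRemove$. This cardinality mismatch is the main obstacle, so the cleanest way to handle it is to pad the instance with $m$ auxiliary zero items, which do not affect the sum but provide flexibility for matching cardinalities.

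Explicitly, I would build an \LRemoval\ instance with $\noPoints = 2m$ one-dimensional vectors, setting $\inpElem_i = a_i$ for $i \le m$ and $\inpElem_i = 0$ for $m < i \le 2m$, together with $\noRemove = m$, $\target = T/m$ (rescale to an integer if preferred by multiplying every scalar by $m$), and $q = 0$. Any kept set of size $m$ then consists of some $S \subseteq \{1, \ldots, m\}$ together with $m - |S|$ zeros, so its mean equals $\frac{1}{m}\sum_{i \in S} a_i$, which matches $\target$ exactly iff $\sum_{i \in S} a_i = T$. Both directions of the equivalence are then immediate, the construction is polynomial in the input size, and combined with NP-membership this establishes that Decision-\LRemoval\ is NP-complete.
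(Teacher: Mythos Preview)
Your proof is correct, but it takes a somewhat different route than the paper. The paper reduces from the \emph{cardinality-constrained} version of Subset Sum (given $U=\{x_1,\ldots,x_n\}$, a size $j$, and a target $J$, is there a subset of size exactly $j$ summing to $J$?), which allows a direct one-line translation: set $\inpSet=U$, $\noRemove=n-j$, $\target=J/j$, and $q=0$, with no auxiliary elements. Because the source problem already fixes the subset size, the cardinality mismatch you identify simply does not arise.

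You instead start from the standard Subset Sum formulation with no size constraint and resolve the mismatch by padding with $m$ zeros, so that any kept set of size $m$ corresponds to an arbitrary-size subset of the original integers. This is a perfectly valid and clean maneuver; it has the minor advantage of starting from the more commonly stated version of Subset Sum, at the cost of doubling the instance size. You also explicitly verify membership in NP, which the paper omits. Either argument suffices; yours is a bit more self-contained, the paper's a bit more direct.
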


\begin{proof} \label{proof:1}
Without the loss of the generality, we show the NP-completeness of the \LRemoval \ problem
for the case when the elements in $\inpSet$ are one-dimensional.
We will reduce an instance of the Subset Sum problem to \LRemoval. An instance of Subset Sum problem states that given a universe of $n$ numbers $U=\{x_1,x_2, \ldots, x_n\}$, does there exist a subset of size $j$ whose sum is $J$? We translate an instance of Subset Sum to \LRemoval \ by setting $\inpSet$ to $U$, $\noRemove = \noPoints - j$ and $\target$ to $\frac{J}{\noPoints - \noRemove} $ and $q = 0$.
We can decide the Subset Sum problem
if and only of we can decide for \LRemoval\ problem.\end{proof}

\begin{corollary}
The \LRemoval\ problem is NP-hard and NP-hard to approximate. That is, it is NP-hard to find a polynomial-time approximation algorithm for the \LRemoval\ problem.
\end{corollary}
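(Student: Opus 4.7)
The plan is to leverage Lemma~\ref{l_removal_np} together with the specific structure of the Subset Sum reduction used there. The NP-hardness of the optimization version of \LRemoval\ is immediate: any polynomial-time algorithm that returns the optimal cost would, by comparison with $q$, decide Decision-\LRemoval\ in polynomial time, contradicting Lemma~\ref{l_removal_np}. So the first part of the corollary follows with no additional work.

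For the inapproximability claim, the key property I would exploit is that the reduction in the proof of Lemma~\ref{l_removal_np} produces \LRemoval\ instances in which the optimum is \emph{exactly} zero on YES-instances of Subset Sum and \emph{strictly positive} on NO-instances. Concretely, with $\target = J/(\noPoints - \noRemove)$ and the squared $L^2$ cost, a subset $\inpSet_\noRemove$ of size $\noRemove$ achieves $\distanceFunction(\mean(\inpSet \setminus \inpSet_\noRemove), \target) = 0$ if and only if the complementary subset sums to $J$. I would then argue by contradiction: suppose some polynomial-time algorithm \calA\ achieves a multiplicative approximation ratio $\alpha$ for any finite $\alpha \geq 1$. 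On a YES-instance, \calA\ must return a solution of cost at most $\alpha \cdot 0 = 0$; since distances are non-negative, the returned cost is exactly $0$. On a NO-instance, every feasible solution has strictly positive cost, so \calA\ must return a positive value.

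We could therefore decide Subset Sum in polynomial time by running \calA\ and checking whether the returned cost is zero, contradicting the NP-hardness of Subset Sum. This establishes that no polynomial-time multiplicative approximation exists unless $\mathrm{P} = \mathrm{NP}$.

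The main obstacle, such as it is, lies in verifying that the reduction genuinely produces a \emph{gap} between $0$ and positive cost rather than merely between two small positive values; a constant-factor or even super-polynomial multiplicative gap around zero is what makes the inapproximability argument work. Because the target $\target$ in the reduction is rational and the mean of any subset is also rational with bounded denominator, there is no rounding subtlety: the YES/NO dichotomy truly corresponds to the $0$/positive dichotomy, and the argument goes through without further modification.
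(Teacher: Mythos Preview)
Your proposal is correct and follows essentially the same approach as the paper: assume an $\alpha$-approximation algorithm exists, observe that on the Subset Sum--derived instances the optimum is $0$ exactly on YES-instances, and conclude that the approximation algorithm would decide the $q=0$ case of Decision-\LRemoval, contradicting Lemma~\ref{l_removal_np}. Your write-up is in fact more careful than the paper's in spelling out the $0$/positive dichotomy and why no rounding issues arise.
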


\begin{proof}
We will prove the hardness of approximation of the \LRemoval\ problem by contradiction. Assume that there exists an $\alpha$-approximation algorithm for the \LRemoval\ problem. Then if $S^*$ is the optimal solution to the problem and $S^A$ is the solution output by this approximation algorithm, it will hold
that $  \distanceFunction( \mean(S^{A} ), \target ) \leq \alpha  \distanceFunction( \mean(S^{*} ), \target ))$. 
 If such an approximation algorithm exists, then this algorithm can be used to decide the decision instances of the \LRemoval\  problem for which $q = 0$. However, this contradicts the proof of Lemma~\ref{l_removal_np}, which indicates that these problems are also NP-hard. Thus, such an approximation algorithm does not exist.
\end{proof}

As a use case of this problem, we can name the Characteristic-Review Selection problem (CRS) in~\cite{lappas2012selecting} which aims to find a small set of reviews from a corpus of reviews for an item, such that the subset 
presents the whole corpus the best. This is highly useful in on-line shopping and review websites. Consider Tripadvisor;  users can rate the characteristics of a hotel such as value, sleep quality, etc. When a user searches for a  hotel, she would like to see only an informative subset of reviews which represent all reviews the best and the mean of each feature in the selected subset is as close as possible to the mean of features in the whole set of reviews.

\subsection{The \targetClustering \ Problem} 
Given a set of $n$ workers with different skills and a set of $k$ tasks that need to be done collaboratively
with given average required skill levels for each task, the goal is to form teams of workers that their mean skill level matches each task the best.
This team formation problem can be considered as one of the applications of the $\targetClustering$ (Characteristic Partitioning) problem which is formally defined in the following problem

\begin{problem}[\targetClustering]\label{prob:targetCluster}
Given a set $\inpSet = \lbrace \inpElem_{1}, \inpElem_{2}, \ldots, \inpElem_{\noPoints} \rbrace$ and target vectors $\targetSet = \lbrace \target_{1}, \target_{2}, \ldots, \target_{\noCluster} \rbrace$,  partition $\inpSet$ into $\noCluster$ partitions $\cluster_1, \cluster_2, \ldots, \cluster_\noCluster$ such that
%$$\sum\limits_{i=1}^\noCluster \sum\limits_{\inpElem \in C_{i}}  || \inpElem - \target_{i}||_{2}^{2}$$
%$$\sum\limits_{i=1}^\noCluster || \mu(C_{i}) - \target_{i}||_{2}^{2}$$
$$\sum\limits_{i=1}^\noCluster \distanceFunction( \mean(\cluster_{i}), \target_{i} )$$
is minimized. 
\end{problem}
%
%Question:  Is the aforementioned cost function equal to
%$$\sum\limits_{i=1}^\noCluster |C_{i}| * || \inpElem - \target_{i}||_{2}^{2}$$
%
%Another interesting cost function is:
%$$\sum\limits_{i=1}^\noCluster || \mu(C_{i} - \target_{i}||_{2}^{2}$$

\begin{corollary}
The \targetClustering\ problem is NP-hard to solve and NP-hard to approximate.
\end{corollary}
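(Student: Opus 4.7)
The plan is to mirror the two-step strategy used for \LRemoval: first prove NP-completeness of the decision version of \targetClustering, then deduce NP-hardness of approximation via the same contradiction argument. Define Decision-\targetClustering\ as the question of whether a given instance $(\inpSet, \targetSet)$ admits a partition with total cost at most a threshold $q$. As in the \LRemoval\ proof, I would restrict attention to one-dimensional data, since hardness there implies hardness in general.

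For the NP-completeness step, I would extend the Subset Sum reduction behind Lemma~\ref{l_removal_np}. Given a Subset Sum instance $(U, j, J)$ with $|U| = n$ and total sum $S$, I would construct a \targetClustering\ instance with $\inpSet = U$, $\noCluster = 2$, $\target_1 = J/j$, and $\target_2 = (S-J)/(n-j)$, and ask whether a partition of cost $0$ exists. The forward direction is immediate: a size-$j$ subset summing to $J$, paired with its complement, hits both targets exactly. For the reverse direction, suppose $(\cluster_1, \cluster_2)$ has cost $0$, so $\mean(\cluster_1) = \target_1$ and $\mean(\cluster_2) = \target_2$. Using the conservation identity $|\cluster_1|\cdot \mean(\cluster_1) + |\cluster_2|\cdot \mean(\cluster_2) = n\cdot \mean(\inpSet)$, a short algebraic manipulation yields $(|\cluster_1| - j)(Jn - Sj) = 0$. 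Restricting the reduction to Subset Sum instances with $J/j \neq S/n$, which is without loss of generality via a simple padding argument, forces $|\cluster_1| = j$ and hence the sum of $\cluster_1$ equals $J$, exhibiting a valid Subset Sum witness.

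NP-hardness of approximation then follows by the exact contradiction used in the corollary for \LRemoval. Any polynomial-time $\alpha$-approximation, when run on a YES instance of Decision-\targetClustering\ with $q = 0$, must return a solution of cost at most $\alpha \cdot 0 = 0$; such an algorithm would therefore decide the $q = 0$ version in polynomial time, contradicting the NP-completeness established above.

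The main obstacle is the reverse direction of the reduction. Unlike \LRemoval, \targetClustering\ places no explicit size constraint on its parts, so the hardness must emerge entirely from the choice of target vectors. The conservation identity above does the heavy lifting, implicitly pinning $|\cluster_1|$ to $j$ once the degenerate case $J/j = S/n$ is excluded. With that identity in hand, the remainder of the argument parallels the preceding lemma and corollary almost verbatim.
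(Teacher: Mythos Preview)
Your plan is correct and follows the same high-level template as the paper (reduce Subset Sum with $\noCluster=2$ and $\target_1=J/j$, aim for cost zero, then deduce inapproximability from the zero-optimum instances). The execution differs, though. The paper does \emph{not} use your conservation identity to pin down $|\cluster_1|$. Instead, it appends one extra element $\inpElem_{n+1}$ with $\inpElem_{n+1}\gg v$ (where $v$ bounds the original inputs) and sets $\target_2=\bigl(\inpElem_{n+1}+\mathrm{Sum}(\inpSet)-J\bigr)/(n-j+1)$. The huge element is forced into $\cluster_2$ because $\target_1=J/j\le v$, and then a cost-zero solution must place exactly $n-j+1$ points (including $\inpElem_{n+1}$) in $\cluster_2$, leaving $|\cluster_1|=j$. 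In effect, the paper builds the size constraint into a gadget, whereas you extract it algebraically from the identity $|\cluster_1|\cdot\mean(\cluster_1)+|\cluster_2|\cdot\mean(\cluster_2)=S$, at the price of excluding the degenerate case $J/j=S/n$ via padding. Your route is slightly cleaner (no auxiliary point, and the same identity could have been applied to the paper's construction anyway), while the paper's gadget makes the non-degeneracy condition automatic by choosing $\inpElem_{n+1}$ large enough. Both arguments yield the inapproximability conclusion by the identical zero-cost contradiction you describe.
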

%For the proof of complexity of \targetClustering\ problem, we refer the reader to supplementary material.
\begin{proof}
We will reduce an instance of Subset Sum problem to \targetClustering\ problem, by setting \noCluster=2 and $\target_1=\frac{J}{j}$. Thus if Cluster 1 has exactly $j$ points which the mean of points is $\frac{J}{j}$, the sum of elements in cluster 1 would be equal to $J$.  Without loss of generality let's assume for all $\inpElem_i \in \inpSet$, $0 \leq \inpElem_i \leq v$. Now we can add an extra point $\inpElem_{n+1}$ such that $ v \ll \inpElem_{n+1}$. Now we can set $\target_2=\frac{\inpElem_{n+1} +  Sum(\inpSet) - J }{\noPoints - j + 1}$. In order to minimize the cost of \targetClustering\ problem, there should be $\noPoints - j$ elements of \inpSet\ and $\inpElem_{n+1}$  in cluster 2 and hence $j$ elements in Cluster 1. 
\end{proof}

\iffalse
\begin{lemma}
\label{target_clustering_np}
The \targetClustering\ problem is NP-hard.
\end{lemma}
\begin{proof}
We will reduce an instance of \LRemoval\ problem to \targetClustering\ problem. We translate an instance of \LRemoval\ to \targetClustering\  by setting $\noCluster = 1$. Now we can find the optimal solution for the \LRemoval\  problem if and only of we can find the best solution for \targetClustering\ problem.
\end{proof}
\fi

\subsection{The \targetLClustering\ Problem} \label{subsection:final_prob}
For many applications, it is not necessary to assign \emph{all} the points in the dataset to teams. 
For instance, when separating a workforce into teams so that each team has a specific level of expertise in each skill, it is acceptable to exclude some of the workers.
In general, having the option to exclude up to a fixed number of points adds flexibility and can only make the problem easier to solve. We capture this intuition 
in a formal problem definition as follows.

\begin{problem}[\targetLClustering] \label{section:final_prob}
Given a set $\inpSet = \lbrace \inpElem_{1}, \inpElem_{2}, \ldots, \inpElem_{\noPoints} \rbrace$, and target vectors $\targetSet = \lbrace \target_{1}, \target_{2}, \ldots, \target_{\noCluster} \rbrace$, and a budget $\noRemove$, find $\noRemove$ points ($\inpSet_\noRemove$) to remove from the dataset and partition the rest of the points $\inpSet \setminus \inpSet_{\noRemove}$ into $\noCluster$ partitions $\cluster_1, \cluster_2,  \ldots, \cluster_\noCluster$ such that
%$$\sum\limits_{i=1}^\noCluster || \mu(C_{i}) - \target_{i}||_{2}^{2}$$
Cost = $$\sum\limits_{i=1}^\noCluster \distanceFunction( \mean(\cluster_{i}), \target_{i} )$$
is minimized.
\end{problem}

\begin{corollary}
The \targetLClustering\ problem is NP-hard to solve and NP-hard to approximate.
\end{corollary}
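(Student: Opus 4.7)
The plan is to observe that \targetLClustering\ is a strict generalization of \targetClustering\, so both hardness results follow by restriction to the special case of the budget parameter. Concretely, I would reduce \targetClustering\ to \targetLClustering\ by taking an arbitrary instance $(\inpSet, \targetSet)$ of \targetClustering\ and mapping it to the \targetLClustering\ instance $(\inpSet, \targetSet, \noRemove=0)$. When $\noRemove=0$ we are forbidden from discarding any point, so the feasible solutions of the two instances coincide, and the objectives are identical. Thus any algorithm that solves \targetLClustering\ exactly also solves \targetClustering\ exactly, and since the latter is NP-hard (by the preceding Corollary), so is the former. This reduction is trivially polynomial-time and preserves the objective value exactly, which is important for the approximation argument below.

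For the inapproximability claim I would replicate the contradiction used for \LRemoval\ and \targetClustering. Suppose, for contradiction, that there is a polynomial-time $\alpha$-approximation algorithm $\calA$ for \targetLClustering. Given an instance $(\inpSet, \targetSet)$ of \targetClustering, run $\calA$ on $(\inpSet, \targetSet, 0)$. By the equivalence of the two instances when $\noRemove=0$, the output is a partition of all of \inpSet\ into \noCluster\ teams whose cost is within a factor $\alpha$ of the optimal \targetClustering\ cost. In particular, if the optimal \targetClustering\ cost is $0$, the approximation algorithm must also return a solution of cost $0$, so $\calA$ decides whether the optimum of the constructed \targetClustering\ instance equals $0$. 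Since the Subset Sum reduction used to establish NP-hardness of \targetClustering\ produces instances whose decision version with threshold $q=0$ is exactly the Subset Sum question, this would decide Subset Sum in polynomial time, contradicting its NP-completeness.

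The main obstacle, such as it is, lies only in being careful about the value-preserving nature of the reduction: I must make sure that setting $\noRemove=0$ does not change the feasible region or the cost function, so that the optimum of the \targetLClustering\ instance equals the optimum of the \targetClustering\ instance. This is immediate from the problem definition, since removing zero points means $\inpSet_\noRemove = \emptyset$ and $\inpSet \setminus \inpSet_\noRemove = \inpSet$. Once this is noted, both NP-hardness and NP-hardness of approximation follow by transferring the corresponding statements for \targetClustering\ through the reduction, so no new combinatorial construction is needed beyond the $\noRemove=0$ restriction.
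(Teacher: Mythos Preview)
Your proposal is correct and follows exactly the paper's approach: the paper simply notes that \targetLClustering\ contains \targetClustering\ as a special case when $\noRemove=0$, which is precisely your restriction argument. You have in fact been more thorough than the paper, which states the inapproximability without spelling out the chain back to the zero-threshold Subset Sum instances as you do.
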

This problem is clearly NP-hard, as it contains the \targetClustering\ problem as a special case (for $\noRemove=0$).
%
\iffalse
\begin{lemma}
\label{target_clustering_removal_np}
The \targetLClustering\ problem is NP-hard.
\end{lemma}

\begin{proof}
It is easy to see \targetLClustering is NP-hard. We will reduce an instance of \targetClustering problem to \targetLClustering problem. We translate an instance of \targetClustering to \targetLClustering  by setting $\noRemove = 0$. Now it is easy to see that we can find the optimal solution for the \targetClustering  problem if and only of we can find the best solution for \targetLClustering problem.
\end{proof}
\fi
%

So far we have discussed the \targetLClustering\ problem.  For a collection of workers \inpSet\ in which each worker has a proficiency level for each skill, the most natural translation of the target vectors  is the mean of the proficiencies or the required skills of a given project. For instance, consider online labor markets such as Freelancer
(\url{www.freelancer.com}), Guru (\url{www.guru.com}), and  oDesk (\url{www.odesk.com}) where employers hire freelancers with specific skills to work on different types of projects. The required skills listed for each project in these platforms can be used as the target vectors. In such a setting, each team should poses a specific share of expertise across all skills that makes the team capable of finishing a particular task or project.

\textbf{Discussion:} 
As a natural result of our objective function, as the number of partitions grows, the cost increase as well. 
In some applications, target vectors are not given, and in such a setting  the goal is to form fair teams with the same level of proficiencies. An example is partitioning the students into teams with the same abilities. If $\noCluster$ is not predetermined either, finding the right $\noCluster$ might not be easy due to the fact that as $\noCluster$ increases, the cost may increase as well. In this situation, we are not able to increase the number of partitions $\noCluster$, and stop when the cost is not improved. To overcome this, one can redefine the cost as the following:
$$\sum\limits_{i=1}^\noCluster \distanceFunction( \mean(\cluster_{i}), \target_{i} ) *|C_{i}|$$

%As a natural result of our objective function, as the number of clusters grows, the cost increase as well. If  more clusters are desired and each time the cluster center is getting far from the desired target; the overall cost should also increase. However if we want to normalize the cost (such that it doesn't increase as the number of clusters grows), we can define the cost as $$\sum\limits_{i=1}^\noCluster \distanceFunction( \mean(C_{i}) , \target_{i} ) *|C_{i}|$$ this means that for each cluster we are paying a cost of being far from the desired target multiplied by number of people in that cluster. This is only useful when we want to partition students into teams with equal abilities. In such a case, we can even try different number of  clusters, and choose the $\noCluster$ which minimizes the cost. 

\section{Algorithm}\label{sec:algo}
\newcommand{\mbr} {{\sc MBR}}
In this section, we describe \finalAlgo; it finds an efficient solution for \targetLClustering. We start by presenting algorithms to solve \LRemoval\ and \targetClustering\ problem and then we use these algorithms to solve \targetLClustering\ problem.
%We first give some background on kmeans, random partitioning, assignment problem and Hungarian algorithm.
%Then we proceed  by presenting algorithm to solve \LRemoval and \targetClustering problem. Then we used those algorithms to solve  the \targetLClustering problem.

\subsection{Solving The \LRemoval\ Problem}
Although \LRemoval\  problem is NP-hard to solve and approximate,  we propose two heuristic algorithms that work well in practice. 

\spara{The \greedy\ Algorithm:}
The \greedy\ algorithm  is an iterative algorithm to solve \LRemoval\ problem. At each iteration, the algorithm selects a point to remove from data, such that it decrease the distance of  \target\ from the mean of remaining points. The pseudocode of the \greedy\ algorithm is shown in Algorithm \ref{algo:greedy}.

\vspace{-7pt} 
\begin{algorithm}[ht!]
\scriptsize 
\begin{algorithmic}[1]
\Statex {\bf Input:}  Input set \inpSet, target vector \target\ and number of points to be removed \noRemove
\Statex {\bf Output:}  A subset $\inpSet_{\noRemove}$ such that $|\inpSet_{\noRemove}| = \noRemove$
\State $\inpSet_{\noRemove} = \lbrace \rbrace$
\For{ $i = 1 \ldots \noRemove$} \label{greedy:iter}
\State $\inpElem =argmin_{\inpElem' \in \inpSet }(\distanceFunction(\mean(\inpSet \setminus \lbrace \inpElem'  \rbrace), \target ) ))$ \label{greedy:dist}
\State $\inpSet_{\noRemove} = \inpSet_{\noRemove}  \bigcup  \lbrace \inpElem \rbrace$
\State $\inpSet = \inpSet \setminus \lbrace \inpElem \rbrace$
\EndFor 
\State return $\inpSet_{\noRemove}$ 
\end{algorithmic}
\caption{\label{algo:greedy} The \greedy\ Algorithm}
\end{algorithm} 
\vspace{-7pt} 
The algorithm works as the follows, at each iteration in Line \ref{greedy:iter}, a point is selected to be removed from the set of remaining points  such that the mean of the remaining points will be close to target \target.
For a collection \inpSet\ of \noPoints\ items, the running time of the \greedy\ algorithm is $O(\noRemove  \noPoints  T_\dimension)$ where \noRemove\ is the number of points to be removed from input data and $T_\dimension$ is the time required to compute the distance \distanceFunction\ in line \ref{greedy:dist} of the algorithm.  In our case, we used euclidean distance and therefore this time is $O(\dimension)$ (number of dimensions). Thus the running  of the \greedy\ algorithm is $O(\noRemove  \noPoints \dimension)$.
Although the \greedy\ algorithms is a simple and fast solution to \LRemoval\ problem; but in practice, it doesn't perform well. Therefore we propose another solution, \cvx\ algorithm, which is presented next.

\spara{The \cvx\ Algorithm:} 
We can formulate \LRemoval\  problem as a Mixed Integer optimization problem to find a binary vector $x$ such that $\mean(\inpSet x) = \target$ 
%(equivalently $\inpSet x= \frac{\target}{n - \noRemove}$)  
subject to $\sum\limits_{i=1}^\noPoints x=\noPoints-\noRemove$. Since solving mixed integer problem is NP-hard and would require complex algorithms to solve, we instead relax this problem to a convex quadratic programming by removing the binary constraints as shown in Algorithm \ref{algo:cvx}. 
This \cvx\ algorithm forms a nonnegative real-valued vector $x$ such that $\distanceFunction(\target, \mean(\inpSet x))$ (Line \ref{algo_cvx:obj}) is small. Note that the aim is to find a subset \subsetS\ of length $\noPoints - \noRemove$ such that $\mean(\subsetS)$ is close to \target, in another word ideally, we want $\noPoints - \noRemove$ elements  of $x$ to be equal $\frac{1}{(\noPoints - \noRemove )}$, and the rest be 0 (and $\inpSet x=\target$).  This constraint is implied in Line~\ref{algo_cvx:ind_q} and~\ref{algo_cvx:sum}. The algorithm tries to find a vector $x$ of real values such that its elements are between 0 and 1  and the sum of elements of x is at least $\noPoints - \noRemove$. Then we transform this real-valued vector to a binary vector by replacing the $ n - \noRemove$ largest elements to 1 and the rest to 0. We used CVX package in Matlab to solve this convex quadratic programming. 

%We can use one of the existing methods to solve this Mixed Integer problem. Since solving mixed integer would require  or use CVX optimization package in Matlab~\cite{grant2008cvx} to solve this problem (with some modifications). The CVX package uses the DCP  (disciplined convex programming) approach which is an effective methodology for organizing and implementing parser-solvers for convex optimization
% DCP has emerged as an effective methodology for organizing and implementing parser-solvers for convex optimization. In DCP, the user combines built-in functions in specific, convexity preserving ways. The constraints and objective must also followcertain rules. As long as the user conforms to these requirements, the parser can easily verify convexity of the problem and automatically transform it to a standard form, for transfer to the solver.
 %We solved \LRemoval\ in MATLAB with the following piece of code in Algorithm \ref{algo:cvx}.
\vspace{-7pt} 
\begin{algorithm}[ht!]
\scriptsize 
\begin{algorithmic}[1]
\Statex {\bf Input:}  Input set \inpSet, target vector \target\ and number of points to be removed \noRemove
\Statex {\bf Output:}  A subset $\subsetS$ such that $|\subsetS| =  n - \noRemove$
%\State  cvx begin quiet
\Indent
%\State     variable $x(n)$ nonnegative
\State      minimize \distanceFunction($\inpSet  * x * \frac{1}{n - \noRemove},  t'$) \label{algo_cvx:obj}
\State     subject to
\Indent
\State     sum($x$) $\geq n - \noRemove $ \label{algo_cvx:sum}
     %for i=1:length(data)
%\State        $x \leq  \frac{1}{(n - \noRemove )}$  \label{algo_cvx:ind_q}
\State    $x \geq 0$ and $x \leq 1$ and $x \in \mathbb{R}$ \label{algo_cvx:ind_q}
\EndIndent
\EndIndent
%\State cvx end
\State Set $ n - \noRemove$ largest values of $x$ to 1 and the rest to 0
\State return $\subsetS = \inpSet (x)$
\end{algorithmic}
\caption{\label{algo:cvx}  The \cvx\  Algorithm}
\end{algorithm} 
\vspace{-7pt} 
\subsection{Solving \targetClustering\ Problem}
Our algorithm for the \targetClustering\ problem, which we call the \maxBenefit\ algorithm, is a polynomial time algorithm that  finds a partitioning of data points \inpSet, into \noCluster\ partitions. The pseudocode of the \maxBenefit\ algorithm is shown in Algorithm \ref{algo:maxbenefit}.
\vspace{-7pt} 
\begin{algorithm}[ht!]
\scriptsize 
\begin{algorithmic}[1]
\Statex {\bf Input:}  Input set \inpSet, target vectors \targetSet\ and number of partitions \noCluster
\Statex {\bf Output:}  Partitions \cluster
\State $\cluster= \{ \}$ \label{algo:p1_start}
\For{\inpElem\ in \inpSet }
\State $j =argmax_{i=1 \ldots \noCluster}( \distanceFunction(\target_i, \mean(\cluster_i)) -  \distanceFunction(\target_i, \mean(\cluster_i \bigcup {\inpElem}) ))$ \label{algo:p1_asssign}
\State Add \inpElem\ to partition $j$, $\cluster_j$
%\State $\clusterCenter_j = mean(\cluster_j)$
\EndFor  \label{algo:p1_end}
\While{no convergence achieved} \label{algo:p2_start}
\For{\inpElem\ in \inpSet }
\State $h = \text{ The partition } \inpElem \text{ belongs to}$ 
\State $loss =   \distanceFunction(\target_h, mean(\cluster_h \setminus  \lbrace\inpElem\rbrace )) -  \distanceFunction(\target_h, mean(\cluster_h ))$
\State   $j = argmax_{i=1 \ldots \noCluster}( loss +  \distanceFunction(\target_i, \mean(\cluster_i ))  - \distanceFunction(\target_i, \mean(\cluster_i \bigcup \lbrace\inpElem\rbrace )) )$  \label{algo:p2_reassign}
\State Remove \inpElem\ from partition $h$ and update $\mean (\cluster_h)$
\State Add \inpElem\ to partition $j$ and update $\mean (\cluster_j)$
%\State $Cost = \sum\limits_{i=1}^\noCluster \distanceFunction( \mean(\cluster_{i}), \target_{i} )$
%\State $\clusterCenter_j = mean(\cluster_j)$
\EndFor
\EndWhile \label{algo:p2_end}
\State Return \cluster
\end{algorithmic}
\caption{\label{algo:maxbenefit} The \maxBenefit\  Algorithm }
\end{algorithm}  
\vspace{-7pt} 
The algorithm works as follows. First, the input data is  partitioned into \noCluster\ partitions in Line~\ref{algo:p1_start}  to Line~\ref{algo:p1_end}. The decision to assign a point to partition $j$ is made on  Line~\ref{algo:p1_asssign} based on how much that point makes the mean of the partition closer to the target point of that partition.  Then after initial partitions are formed,  points are reassigned in   Line~\ref{algo:p2_start}   to  Line~\ref{algo:p2_end} . For each point, the benefit of assigning it to other partitions is computed. This benefit is defined as the gain of assigning that point to other partitions + the loss of removing it from its own partition (Line~\ref{algo:p2_reassign}).  The process is repeated till the solution stabilizes and there is no more improvement in the cost. In practice, the solution stabilized in a few iterations.  
%We actually keep the mean of each partition and whenever a point is added/removed, we update the corresponding partition center, so we do not to recompute partition centers in Line~\ref{algo:p1_asssign} and \ref{algo:p2_reassign}. 
The running of time of each iteration of Algorithm~\ref{algo:maxbenefit} is  $O(\noPoints \noCluster d )$  in which $d$ is the number of dimensions (for input data \inpSet, and targets \target).
  
\textbf{Discussion:} 
It may seem that instead of using Algorithm \ref{algo:maxbenefit}, we can assign each point to its closest target. To see why this algorithm doesn't work, consider three data points $a=(1, 0)$, $b=(-1, 0)$, $c=(-1, 20)$, and two target vectors $t_1=(0, 0)$ and $t_2=(-1, 10)$. If each point is assigned to the closest target, the first partition includes $a$ and $b$ and the second partition only includes $c$. In this case, the centroids of the partitions are $(0, 0)$ and $(-1, 20)$ respectively. On the other hand, our proposed algorithm assigns $a$ to the first partition and $b$ and $c$ to the second partition with centroids as $(1, 0)$ and $(-1, 10)$ which clearly has a lower cost.

\subsection{Solving the \targetLClustering\ Problem} \label{sec:final_algo}
Let's assume we are given \noCluster\ partitions (such that mean of each partition $\cluster_i$ is close to $\target_i$) and along with each partition, we are given $q$ points (for all $q=1\ldots \noRemove$) to be removed from that partition. Now we can develop a dynamic programming algorithm to optimally identify \noRemove\ points to be removed from all partitions.  Let $B(i, q)$ denote the benefit of removing the given $q$ points from the $i^{th}$ partition. The benefit is defined as how much the mean of a partition will get closer to its designated target. More formally the benefit of removing points $\inpSet_q$ from partition $\cluster_i$ with target $\target_i$ is $\distanceFunction(mean(\cluster_i), t_i) - \distanceFunction(mean(\cluster_i \setminus \inpSet_q),t_i)$. Let also $MBR(i - 1, j - q)$ denote the benefit of removing $j - q$ points from partitions $1^{st}$ to $(i-1)^{th}$ partitions. The final goal is to find the $MBR(\noCluster, \noRemove)$, the benefit of optimally removing $\noRemove$ points from first to last ($\noCluster^{th})$ partition. The following dynamic programming shows how we decide upon removing points from partitions optimally.
$$MBR(i, j) = \max\limits_{0 \leq q \leq j} \lbrace MBR(i - 1, j - q) + B(i, q) \rbrace$$
At this stage, we have the tool to remove \noRemove\ points from all partitions. Now we can use Algorithm \ref{algo:maxbenefit} to find the partitions \cluster\ and using Algorithm \ref{algo:cvx}, we can find the $q$ points to be removed from each partition. Putting everything together, we end up with Algorithm \ref{algo:targetLClustering} which is an efficient solution to the \targetLClustering\ problem.  We call this algorithm \finalAlgo.
%Our algorithm for solving \targetLClustering\ problem consists of three steps.  In the first step, we will partition the input into \noCluster\ clusters, using the algorithm for solving \targetClustering\ algorithm (algorithm~\ref{algo:maxbenefit}). Then at step 2 for each partition, we find $i$ points for $i= 1\ldots\noRemove$  to remove from each partition using the algorithm for \LRemoval\ problem (algorithm~\ref{algo:cvx}). Finally, at step 3 we use the following dynamic programming to decide which $\noRemove$ points should be removed from all partitions. 
%$$MBR(i, j) = \max\limits_{0 \leq q \leq j} \lbrace MBR(i - 1, j - q) + B(i, q) \rbrace$$
%$B(i, q)$ is the benefit of removing $q$ points from a cluster $i$ in which benefit is defined as how much the mean will get closer to the target. More formally the benefit of removing points $\inpSet_q$ from cluster $\cluster_i$ with target $\target_i$ is $\distanceFunction(mean(\cluster_i), t_i) - \distanceFunction(mean(\cluster_i \setminus \inpSet_q),t_i)$. 
\vspace{-7pt} 
\begin{algorithm}[ht!]
\scriptsize 
\begin{algorithmic}[1]
\Statex {\bf Input:}  Input set \inpSet, target vectors \targetSet, number of partitions \noCluster\ and number of points to be removed \noRemove
\Statex {\bf Output:}  Partitions \cluster
\State \cluster = Partition  \inpSet\ into \noCluster\ parts using Algorithm~\ref{algo:maxbenefit}  \label{algo_MBR:clustering}
\For{$\cluster_i$ in \cluster } \label{algo_MBR:removing_start}
\For{$j=1 \ldots \noRemove$}
\State $B(i,j)$ = Benefit of removing $j$ points from partition $p_i$ using Algorithm~\ref{algo:cvx}    
\EndFor \label{algo_MBR:remove}
\EndFor   \label{algo_MBR:removing_end}
\For{$i=1 \ldots \noCluster$}
\For{$j=1 \ldots \noRemove$}
\State $MBR(i, j) = \max\limits_{0 \leq q \leq j} \lbrace MBR(i - 1, j - q) + B(i, q) \rbrace$
\EndFor
\EndFor
\State return $MBR(\noCluster, \noRemove)$ and remove corresponding $\noRemove$ from partitions \label{algo_MBR:dp}
\end{algorithmic}
\caption{\label{algo:targetLClustering} The \finalAlgo\  Algorithm }
\end{algorithm} 
\vspace{-7pt} 
The algorithm works as follows:  First, the data is partitioned into \noCluster\ partitions such that the mean of each partition is close to its target point in Line~\ref{algo_MBR:clustering}. Then for each partition,  the $j$ ($j=1\ldots\noRemove$) points that can be removed from that partition is obtained in Line~\ref{algo_MBR:removing_start} to Line~\ref{algo_MBR:removing_end}, and the benefit of removing each $j$ point from each partition is kept  in the matrix $B$. Finally, with a dynamic programming in Line~\ref{algo_MBR:dp} it is decided which $\noRemove$ points should be removed from all partitions. Note that in the algorithm instead of using Algorithm \ref{algo:cvx}  at Line \ref{algo_MBR:remove}, we can also use the \greedy\ algorithm, although the \greedy\ algorithm runs faster, but Algorithm \ref{algo:cvx} outperforms the \greedy\ algorithm. 

Running time of step 1 (Line~\ref{algo_MBR:clustering}) is $O(\noPoints \noCluster d)$. Step 2 (Line~\ref{algo_MBR:removing_start} to Line~\ref{algo_MBR:removing_end}) involves solving $\noCluster \noRemove$ quadratic programming. In theory, the convex quadratic programming can be solved  in the cubic number of variables~\cite{Ye1989}. Thus the  overall running time  of step 2 is $O(\noCluster \noRemove \noPoints^3)$. We have used  the Infeasible path-following algorithm to solve convex quadratic programming which in practice runs much faster than $O(\noPoints^3)$.  Finally, the dynamic programming will take $O(\noCluster \noRemove^2)$. Therefore the overall running time of Algorithm~\ref{algo:targetLClustering} is $O(\noCluster \noRemove \noPoints^3)$. 
In practice, our algorithm tends to have partitions of almost the same size and hence number of data points in each partition is almost $\frac{\noPoints}{\noCluster}$. This means our algorithm runs in $O(\frac{\noRemove \noPoints^3}{\noCluster^2})$ in practice. 
Note that Step 2 in  Algorithm \ref{algo:targetLClustering} can be executed in parallel for all partitions and all values of $j$.

%It should be noted that Algorithm \ref{algo:targetLClustering} is not a new clustering algorithm. \finalAlgo\ algorithm is an attempt to solve Problem  \ref{section:final_prob} which arises in many workforce and educational settings. 
%Mainly the running time of our algorithm is bounded by the running time of the solver in line~\ref{algo_MBR:remove}.  

\section{Experiments}\label{sec:experiments}
In this section, we evaluate our algorithmic solution to the {\targetLClustering} problem. 
\textbf{Our datasets and implementation are immediately and freely available for download.}\footnote{https://github.com/TeamPartitioning/TeamPartitioning} 
We begin with a description of the datasets and baseline algorithms,  
followed by a detailed description of each experiment.

%The goal of these experiments is to gain an understanding of how our  algorithm works in terms of performance (objective function) and runtime. Furthermore, we want to %understand
%how the \noRemove\ parameter impacts our algorithm. We validate our proposed algorithm on a mix of synthetic
%and real datasets. The datasets our available in authors homepage. 
 %We also make the real world dataset, semi synthetic dataset and the code to generate datasets available. 
\vspace{-3pt}   
 \begin{figure*}
\centering
  \includegraphics[scale=0.27]{./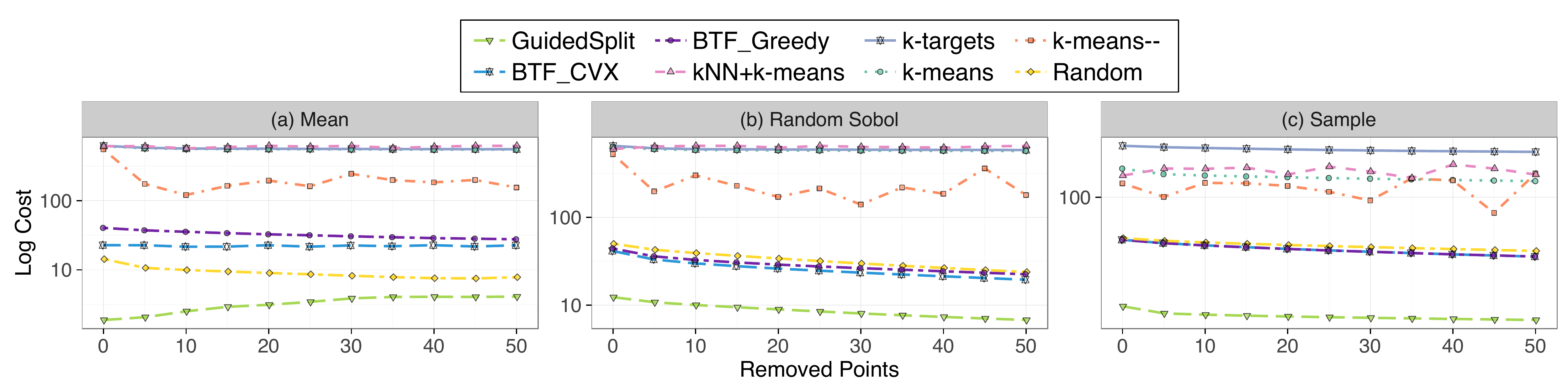} 
  \vspace{-12pt}   
  \caption{Different algorithms on \DSskills\ dataset with respect to increasing \noRemove. The fixed parameters are \noCluster\ = 5, \noPoints\ = 500 and \dimension\ = 10.} \label{fig:expLSkills}
  \vspace{-12pt}  
\end{figure*}
\vspace{-3pt}   
\subsection{Datasets}
We next describe the datasets that we used in our experiments. 
%\subsection{ {\large \synth} Dataset} \label{data:synth}
\begin{itemize}
%e expect our algorithm to be able to find the right clusters of points while other clustering algorithms such as  \kmeans\ cannot find this hidden structure.
%\subsection{{\large \DSskills} Dataset} 
\item The \DSskills\ dataset consists of  %data collected from Linkedin. It includes the profiles of 
the skill sets of 14960 LinkedIn users who have listed at least one data science role in their profiles. %At the time of collection (July 2015), this was a complete dataset of all such individuals on the platform. For each user profile, we collected all the skills in the profile, as well as the number of endorsements that the user had received for each skill. 
We only keep the 40 most frequent skills in the dataset. Each user is thus represented by a vector of length 40, which contains the number of endorsements she has received for each skill from fellow LinkedIn users.
%\subsection{{\large \bia} Dataset}
\item The \bia\ dataset is collected from entry surveys taken by all students who take the Analytics course offered by one of the authors of this paper. %~\footnote{The University and name of of the instructor are removed for anonymity}. 
The data was collected during 6 different semesters and includes 502 students. For each student,
the dataset includes a self-assessment of her level of expertise with respect to machine learning, analytics, programming, and experience with team projects. The assessments
are given on a scale from 0 (no experience) to 3 (very experienced). %, with higher values representing higher expertise.
% \subsection{{\large \guru} Dataset}
\item The \guru\  dataset consists of  6473 experts and 1764 projects from \href{www.guru.com}{www.guru.com}.  
While the majority of projects in Guru require up to 10 skills, larger projects of 30 skills or more are also posted. The dataset also includes the skills required by the projects, which we use to populate
the target vectors. Both the target vectors and the expert skill sets are binary. If an expert possesses a particular skill (or a project requires that skill), the value for that skills is set to 1, otherwise to 0. 
% \subsection{{\large \freelancer} Dataset}
\item The \freelancer\ dataset contains 1763 experts and 721 projects from \href{www.freelancer.com}{www.freelancer.com}. On the Freelancer platform, employers are only allowed to specify at most 5 skills per project. 
Hence, for each project in our dataset, we have a vector in which the value of the five listed skills are set to 1 and the rest to 0. For each expert and each skill they possess, we have a percentage, such that the sum of skill percentages for each expert is equal to 1.
\item The \synth\ dataset is used to generate a ground truth benchmark for our evaluation. We generate  the synthetic data as follows. Let \noCluster\ be the number of partitions, \pointInCluster\ the number of data points per partition, \noRemove\ the number of noisy points added to the data (and should be removed by an effective algorithm), \dimension\ the number of  dimensions, and \std\ the sampling error (standard deviation). In order to generate synthetic data, we first create \noCluster\ target vectors  by sampling from the $[0,1]^\dimension$ value space. Then, using each target vector, we generate \pointInCluster\ data points from a normal distribution with $\mu = \target_i$ and \std. Finally, we add \noRemove\ noisy instances by sampling uniformly from $[0,1]^\dimension$. 
\end{itemize}
\subsection{Methods For Generating the Target Vectors}
For experiments on the \guru\ and \freelancer\ datasets, we use the skills required by the projects to populate the target vectors.
The other datasets, however, do no include a resource that we can intuitively use for the same purpose.
We address this by considering three different methods for generating the target vectors: {\tt Random-Sobol}, {\tt Mean}, and {\tt Sampling}. 
%The {\tt Mean} method simply sets the $i_{th}$ dimension of target vector to be equal to the average value of all the points in the corresponding $i_{th}$ dimension. 
The {\tt Mean} method sets the target vector equal to the average value of all the points in the dataset.
The {\tt Sample} method randomly selects points from the dataset to serve as the target vectors.
The {\tt Random-Sobol} method uses the concept of the Sobol Sequence~\cite{sobol1967distribution} to generate targets with \dimension\ features.
When using a standard uniform random generator, it is possible for all or many of the samples values to  fall within a specific pocket of the sampling space coincidentally. Quasi-random sequences like the Sobol Sequence address this issue, as their output is constrained in order to lead to low-discrepancy samples.  This is achieved by introducing a correlation between the samples (i.e., the generation of a sample considers the values of all the previous samples), which ensures that the sample values are more evenly spread.  
%For the {\tt Random-Sobol} and {\tt Sample} approaches, we generate \rep\ (number of times we repeated each experiment) target vectors and report the average result achieved by each algorithm over all vectors. 
%Note that in this case, all target vectors are equal.  We also consider the case that we create different target vectors for each partition.
%The Sobol Sequence is a quasi-random
%sequence with the following property: for all values of $N$, its subsequence (x_1, ... x_N) has a low discrepancy. The discrepancy of a sequence is low if the %proportion of points in the sequence falling into an interval 
%in which it forms successively finer uniform partitions of the unit interval and then reorder the coordinates in each dimension. A useful property of this sequence is %the points generated by the sequence should fill $[0,1]^\dimension$ while  minimizing the holes. 
\vspace{-3pt}   
  \begin{figure*}
\centering
  \includegraphics[scale=0.27]{./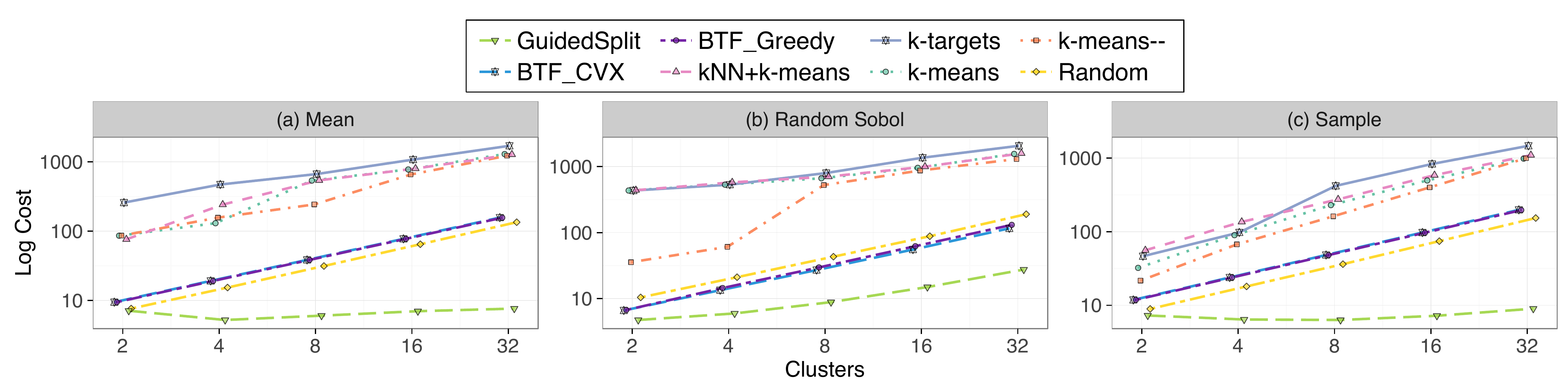} 
    \vspace{-12pt}
  \caption{Different algorithms on  \DSskills\ dataset with respect to increasing \noCluster. The fixed parameters are  \noPoints\ = 500, \noRemove\ = 50 and \dimension\ = 10.} \label{fig:expKSkills}
    \vspace{-12pt}
\end{figure*}
\vspace{-3pt}   
\subsection{Baseline Algorithms}
Next, we describe the baseline algorithms that we compare against our own method.
\begin{itemize}
\item \randomPartitioning: \randomPartitioning\ randomly assigns points to partitions. 
\item \kmeans: \kmeans\ is a clustering method used to minimize the average squared distance between the points in the same cluster. %Solving the \kmeans\ problem~\cite{hartigan1979algorithm} exactly is NP-hard. Lloyd's algorithm~\cite{Lloyd82leastsquares} solves this problem by choosing \noCluster\ centers randomly and assigning the points to the closest center. The centers are then iteratively recomputed based on the points assigned to them. These two phases are repeated until there is no more improvement on the cost of the clustering. After the algorithm converges, we use the Hungarian algorithm to match the obtained cluster centers to the given targets. The same algorithm is also used for the matching phase of all the methods described below.
\item \kmeanst: \kmeanst\ is similar to \kmeans\, with the exception that we used the target vectors to initialize the cluster centers. The motivation is to help the algorithm
converge to a good solution, given its well-documented sensitivity to the initial seed centers~\cite{Arthur:2007}. %Cite: k-means++: The Advantages of Careful Seeding
\item \kmeansmin: \kmeansmin\ ~\cite{ChawlaG13}	 is    a generalization of the \kmeans\ optimization problem, which tries to cluster the data and  discovering the outliers simultaneously. Therefore, the delivered solution consists of \noCluster\ clusters and \noRemove\ outliers. 
\item \knn +\kmeans : In this approach, we computed the nearest neighbors of each point and removed \noRemove\ data points with the largest distance to their nearest neighbors.  After removing these \noRemove\ points, we run \kmeans\  on the remaining data points and get \noCluster\ partitions.
\item \btfCvx\ and \btfGreedy: \textit{Best Team First} is a family of algorithms that work iteratively, and at each iteration creates the best team from the remaining available users. This is a popular technique \cite{anagnostopoulos2012online,Agrawal:2014} that clearly does a excellent job of maximizing the quality of the first group. However; the quality for the subsequent teams can decrease considerably. 
%: There are team formation algorithms that at each time only select one team. We call these family of algorithms Best Team First. In  \cite{anagnostopoulos2012online},  at each time a new team is formed  from all available users  such that the new team posses all the required skill of the current project and the team has small communication overhead.  In \cite{Agrawal:2014}, the goal is to find $k$ teams such that the gain of each team in maximized. The proposed algorithm is an iterative algorithm which at each iteration identifies a team of size $s$ from the remaining data points. Clearly, this aforementioned algorithm does a great job of maximizing the gain for the first group. However; the gain for the subsequent teams can decrease considerably.  
We take a similar approach here, and at each time  we select a team of size $\frac{\noPoints - \noRemove}{\noCluster}$ from the remaining data points. To  select a team of a particular size at each iteration, we use \cvx\ and \greedy\  algorithm where at each iteration, \inpSet\ is the set of remaining data points and the number of points to be removed is the  number of remaining data points - $\frac{\noPoints - \noRemove}{\noCluster}$. We call these baselines  \btfCvx\ and \btfGreedy\ respectively.
%Unfortunately, we are not able to use other team formation algorithms as baselines, as those algorithms optimize very different objective functions and  or are designed for single-team selection, and thus not suitable for our problem. 
\end{itemize}
%We also considered team formation algorithms as baselines (e.g. ~\cite{wang2016truthful}  and~\cite{Agrawal:2014}). However, as these algorithms optimize very different objective functions and  or are designed for single-team selection, they are not suitable for our problem. 
% This algorithm uses the K-nearest neighbors (\knn) approach for distance-based outlier detection to find and remove the \noRemove\ points with the largest distance to that neigbor. After removing these outliers, we run \kmeans\  on the remaining data points. 
%The running of time of the random algorithm is $O(n)$, since it only has to create a random permutation of the points and then create \noCluster\ clusters from members of this permutation. 

After obtaining the partitions using  \randomPartitioning, \kmeans, \kmeanst, \kmeansmin, and \knn +\kmeans\  approaches, we need to find a correspondence between each partition and each target vector. The task of assigning partitions to specific target vectors is an instance of assignment problem. The input to the problem includes two sets of equal sizes, \targetSet\ (given targets) and \clusterCenters\ (obtained partition centers), and 
a weight function $\distanceFunction : \targetSet \times \clusterCenters \rightarrow R$. The goal is to find a bijection $\pi: \clusterCenters  \rightarrow \targetSet$, such that the cost function $\sum_{a\in \clusterCenters} \distanceFunction(a, \pi(a))$ is minimized. We can complete this task via The Hungarian algorithm~\cite{edmonds1972}, which solves the assignment problem optimally in $O(\noCluster^3)$, where $|\clusterCenters| = |\targetSet| = \noCluster$.  Thus, after obtaining the partitions, we use the Hungarian algorithm to match the obtained partition centers to the given targets. 
\vspace{-3pt}   
  \begin{figure}[htb]
\centering
  \includegraphics[scale=0.25]{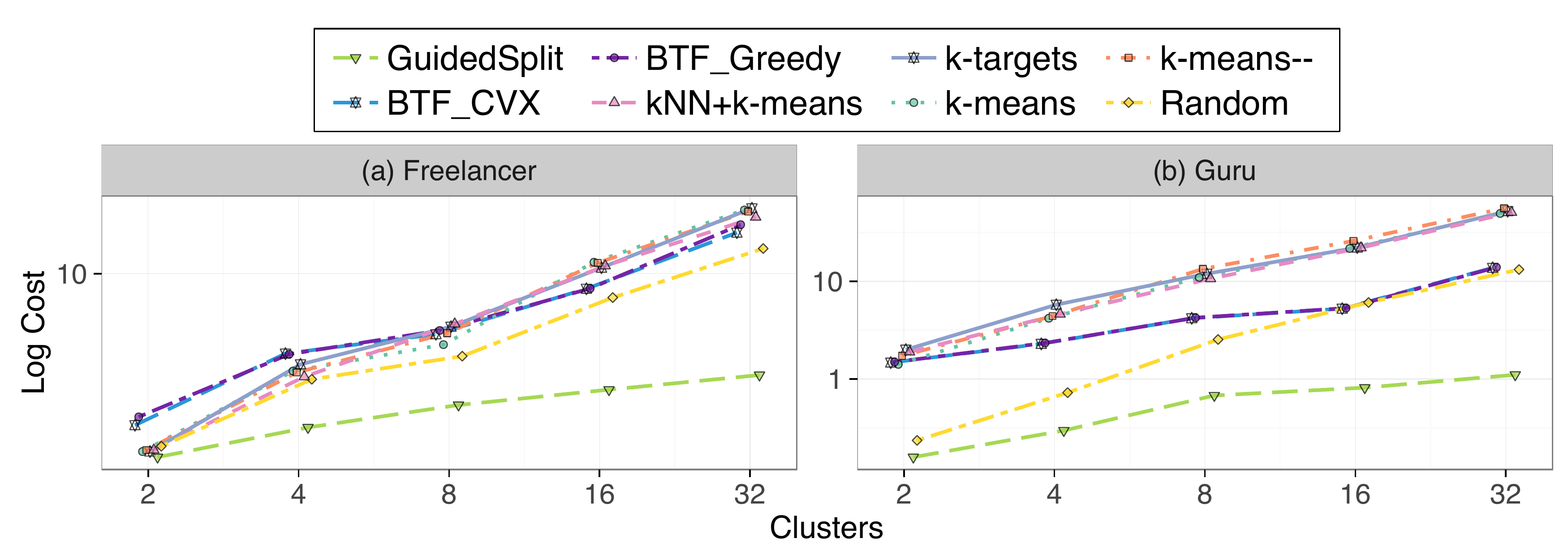} 
    \vspace{-15pt}
 \caption{Different algorithms on \freelancer, and \guru\ datasets. } \label{fig:exp_free_guru}
   \vspace{-15pt}
\end{figure}
%%%% for free guru
\vspace{-3pt}   
\subsection{Experimental Evaluation}
Having all the datasets and target vectors ready as the input of Problem \ref{section:final_prob}, we compare the efficacy of \finalAlgo\ algorithm with the baseline algorithms.  We implemented our algorithm in Matlab and used CVX package to solve the convex optimization problem in the \cvx\ algorithm. All experiments were carried out on a machine with a 2.4 GHZ CPU, 16 GB RAM, running CentOS Linux 7.  We repeated each experiment 25 times and reported the average. The confidence intervals were very tight, and we thus omit them from  the plots. We compare the cost of our algorithm compare to all other baselines. This cost refers to the distance from the mean of each partition to its target vector, see equation \ref{eq:cost}. What follows is the detailed description of each experiment.
%REMOVE THIS:
%The results are shown in figures \ref{fig:expLSkills}, ref{fig:expKSkills},  and  in which we vary parameters \noRemove, \noCluster\ and \dimension. In each figure the %rest of the parameters remain fixed and they are reported on the caption of the figure.

%In Random Sobol  we have used Sobol Sequence~\cite{sobol1967distribution} to generate \rep\ vectors of \dimension, where \rep\ is the number of iterations we have repeated the experiments and reported the average. Sobol sequence is a quasi-random low-discrepancy sequence in which it forms successively finer uniform partitions of the unit interval and then reorder the coordinates in each dimension. A good property of this sequence is the points generated by the sequence should fill $[0,1]^\dimension$ while  minimizing the holes. For each iteration we used one of these vectors and set the targets of all clusters to that vector. In Mean method we set the target of all clusters to the mean of the datatset. Finally in Sample method, we sampled \rep\ points from the dataset and at each iteration we set all targets to one of these samples. 
%
%
%
\subsubsection{Effectiveness with respect to number of discarded points \noRemove}
The goal of this experiment is to demonstrate how different algorithms behave with respect to the number of removed points \noRemove, as presented in the definition of Problem~\ref{section:final_prob}. 
In this experiment we set number of data points to 500, number of partitions to 5, number of dimensions to 10  (\noCluster\ = 5, \noPoints\ = 500 and \dimension\ = 10). 
%Figure~\ref{fig:expLSkills} shows the cost (in log scale) versus \noRemove\ on the \DSskills\ dataset. We also set \noCluster\ = 5, \noPoints\ = 500 and \dimension\ = 10. 
%We have normalized all the values between 0  and 1 and have 
 We report the results of all three different methods of choosing the target vectors ( {\tt Random-Sobol}, {\tt Mean}, and {\tt Sampling} methods).
Figure~\ref{fig:expLSkills} shows the cost (in log scale) versus \noRemove.

We observe that our \finalAlgo\  algorithm consistently outperforms the baseline algorithms for all target vectors, and the cost of  \finalAlgo\ is significantly less than all the baselines. 
We also observe  when all the targets are set to the mean, the \randomPartitioning\ algorithm is doing  well. This is simply because a random sample of the data is expected to have the same mean as the whole data.
%\hl{VERY CONFUSING text: 

An interesting observation is that when targets are equal to the mean of the dataset, Figure~\ref{fig:expLSkills}(a), as \noRemove\ increases the cost of \finalAlgo\ increases as well. The reason is after finding partitions when \noRemove\ = 0, \finalAlgo\ finds near perfect solution  in which mean of each partition is very close to its target. Removing points from the partition discomposes the obtained solution and makes the mean of partition far from its target.
We also tried this experiment  
%with different target vectors for each partition as well as 
on other datasets; the results were similar, and \finalAlgo\ outperforms in those experiments as well. However, the algorithms were more competitive on \DSskills\ dataset. Thus due to lack of space, we only report results on  \DSskills\ dataset.
%We also experiment the cost versus \noRemove\ on the \synth\ dataset. In this experiment we set \noCluster = 10, \pointInCluster = 100 and \dimension = 10. We generated \noCluster\ targets by sampling from space $[0,1]^\dimension$ using Sobol sequence and we used these targets as the desired targets of the clusters. Then using each target we generated \pointInCluster\ points. For each $\noRemove=1\ldots 50$ we added \noRemove\ noisy points to the dataset \inpSet.
% The results on  \synth\ datasets is shown in figure \ref{fig:expLSynth}.  The results illustrate that \finalAlgo is capable of forming the right clusters (with the hidden structures). We also examined the points removed by our algorithm, we observed that majority of removed points are exactly the noisy points we added to the data (the points randomly sampled $[0,1]^\dimension$).  
%We also tried this experiment with different target vectors for each partition as well as on other datasets; the results were similar and due to lack of space the results of those experiments are left out. 
\subsubsection{Effectiveness with respect to number of partitions \noCluster}
The goal of this experiment is to demonstrate how the algorithms behave with respect to the number of partitions \noCluster. In these experiments, we show the cost (in log scale) versus different number of partitions (\noCluster\ = 2, 4, 8, 16, 32) and we fixed \noPoints\ = 500, \noRemove\ = 50 and \dimension\ = 10 on \DSskills\ dataset  and \noPoints\ = 502, \dimension\ =4 and \noRemove\ = 50 for \bia\ dataset. In the experiment on \bia\ dataset, to form fair teams of students with an equal level of expertise, the targets of partitions are set to be equal to the mean of the whole dataset. 
 We present the results in Figure~\ref{fig:expKSkills}(a), Figure~\ref{fig:expKSkills}(b), and Figure~\ref{fig:expKSkills}(c)  for all three different methods of choosing the target vectors ( {\tt Random-Sobol}, {\tt Mean}, and {\tt Sampling} methods) on \DSskills\ dataset  and Figure~\ref{fig:expnbia} depicts the results on \bia\ dataset.  

As observed before,  our \finalAlgo\  algorithm consistently outperforms the baselines for all target vectors, and the cost of  \finalAlgo\ is significantly less than all the baselines. Note that  as the objective function here is to minimize the distance of targets to the mean of partitions, cost increases with the number of partitions. But  \finalAlgo\  is still able to find efficient partitions compared to all baselines, and its cost does not increase dramatically. 
%As the number of clusters increase the cost increases dramatically for all algorithms except \finalAlgo. In a regular \kmeans\ clustering as the number of clusters increases, the \kmeans\ algorithm is able to form a more customized cluster in which the cost of each point to its cluster representative is small. But as the objective function here is to minimize the distance of targets to the mean of clusters, cost increases with the number of clusters. But as \finalAlgo\ is still able to find efficient clusters, its cost does not increase dramatically. 
 %\finalAlgo\ performs really well when targets are set to the mean of the dataset or targets are samples from the data. On the other hand, when targets are chosen by Random Sobol method the targets might be very far from the actual mean of data and makes it harder for \finalAlgo\ to form efficient clusters. 
%and the fixed parameters in the experiment on \synth\ dataset are \pointInCluster=100, \std = 0.1, \dimension =10 and \noRemove = 50. In both experiments, 
%Our result illustrates \finalAlgo\  algorithm outperforms baseline algorithms in \bia\ dataset as well and it minimizes the cost efficiently. 
%The result of algorithms on %\synth\ and 
%\bia\ dataset is shown in figure \ref{fig:expKsynthbia}.(a). % and figure \ref{fig:expKsynthbia}.(b) respectively. 
%We also tried this experiment on \synth, \freelancer, and \guru\ datasets; the result were similar and due to lack of space, we leave out the results on those datasets. 
\vspace{-3pt}   
%\hl{Where is the discussion of the results?}
\begin{figure}[htb]
\centering
\includegraphics[scale=0.25]{./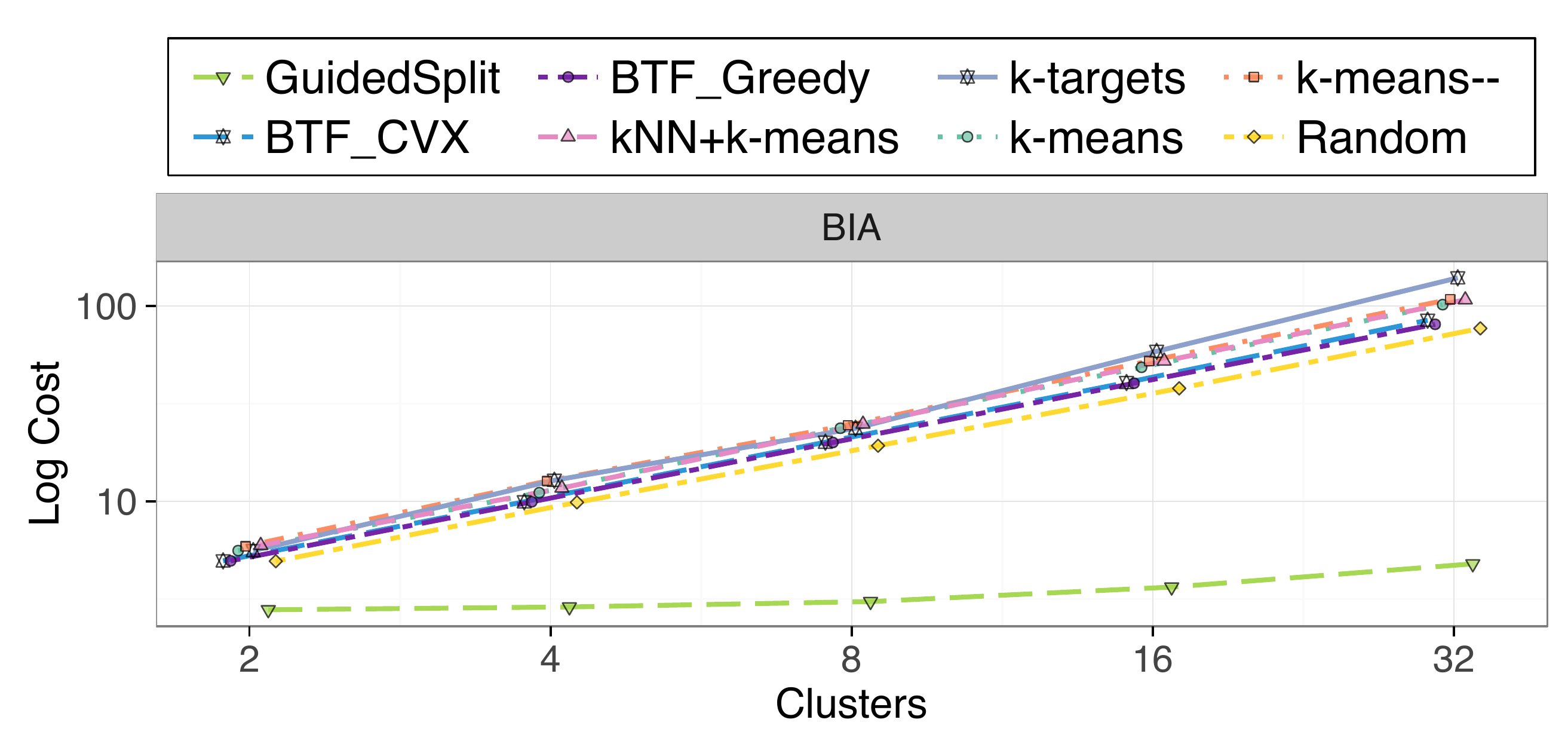}
  \vspace{-15pt}
    \caption{Different algorithms on \bia\  dataset with respect to increasing \noCluster. The fixed parameters are \noPoints\ = 502, \noRemove\ = 50, and \dimension\ = 4.}
      \vspace{-15pt}
    \label{fig:expnbia}
\end{figure}
\vspace{-3pt}   
\subsubsection{Results on the \freelancer\ and \guru\ datasets}
\guru\ and \freelancer\ datasets have a different dynamic compare to other datasets. In these datasets, we can use the required skills of the projects  as the target vectors. Besides, the projects posted in freelancer and guru, are the tasks that require a few skills and they can also be completed by a small team of experts. So we conducted an experiment specifically for these two datasets which indicates  the performance of our algorithm  with respect to the number of points \noPoints, the number of partitions \noCluster, and the number of points to remove \noRemove. 

For this experiment, we used the set of experts as input set  \inpSet, and the projects as target vectors \targetSet. We tried \noCluster\ =2, 4, 8, 16, 32,  and for each value of \noCluster, we used \noCluster\  projects at random.  We  also set \noPoints\ = \noCluster\ * 5 and \noRemove\ = \noCluster\ (and selected \noPoints\ + \noRemove\ experts as random as the input \inpSet). The intuition behind is due to the inherent nature of projects posted in Guru and Freelancer; these projects are small projects that usually do not need more than 5 people to be completed. By setting \noPoints = \noCluster\ * 5 and \noRemove\ = \noCluster, ideally, the size of each team would be 4.
The results in Figure \ref{fig:exp_free_guru}(a) and \ref{fig:exp_free_guru}(b) illustrate \finalAlgo\  algorithm outperforms baseline algorithms on \freelancer\ and \guru\ datasets, and find teams of experts whose proficiencies are close to required skills of projects. 
%\hl{where is the discussion of the results?}
\subsubsection{Effectiveness with respect to population size \noPoints}
We also experiment how different algorithms behave with respect to the number of points \noPoints, for \noPoints\ =100, 500, 1000, 2500, 5000, 10000. In this experiment, we show the cost (in log scale) versus population size, \noPoints on the \synth\ dataset, where the fixed parameters are \noCluster\ = 5, \noRemove\ = 50, and \dimension\ = 10. The result is depicted in figure~\ref{fig:expnSynth}.  

The results illustrate that \finalAlgo\ is capable of forming the right clusters with the hidden structures as was imposed by the \synth\ dataset.  As explained before, 
the \synth\ dataset has a distinct structure; each cluster has a different mean, and this structure  is very easy for \kmeans\, and \kmeanst\ algorithms to find. This is also the reason \randomPartitioning\  performance degrades; because it cannot find the hidden structure of the data.  

In addition, We also examined the points removed by our algorithm; we observed that majority of removed points are exactly the noisy points we added to the data (the points randomly sampled from $[0,1]^\dimension$ space).
We also tried this experiment on other datasets; the results were similar.  However, the algorithms were more competitive on \synth\ dataset compare to other datasets. Thus we only report results on the \synth\ dataset, and we leave out the results on other datasets.  
\vspace{-3pt}   
\begin{figure}[htb]
\centering
\includegraphics[scale=0.32]{./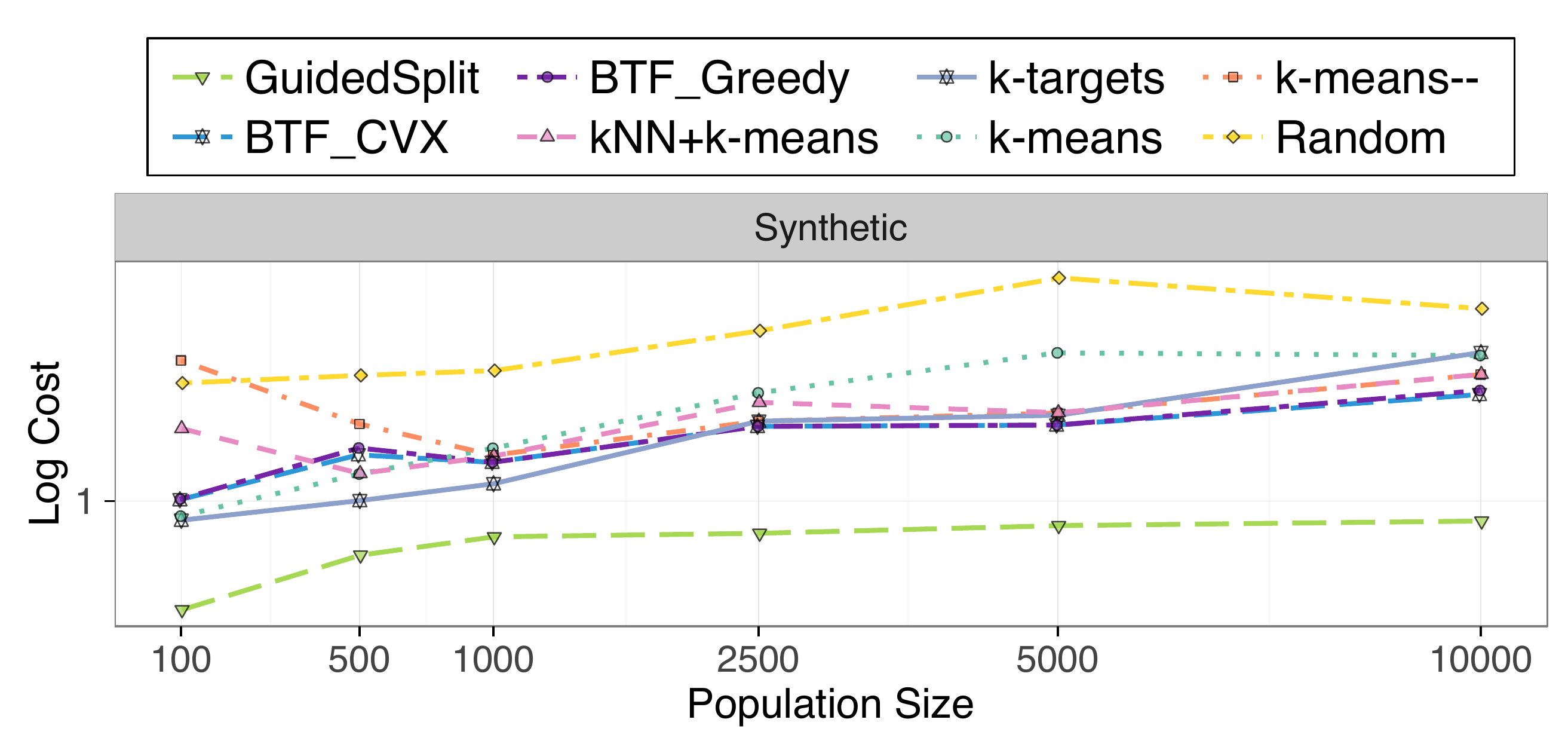}
    \caption{Different algorithms on \synth\ dataset with respect to increasing \noPoints. The fixed parameters are \noCluster\ = 5, \noRemove\ = 50, and \dimension\ = 10.}
    \label{fig:expnSynth}
\end{figure}
\vspace{-3pt}   
\vspace{-3pt}   
 \begin{figure}[htb]
\centering
  \includegraphics[scale=0.25]{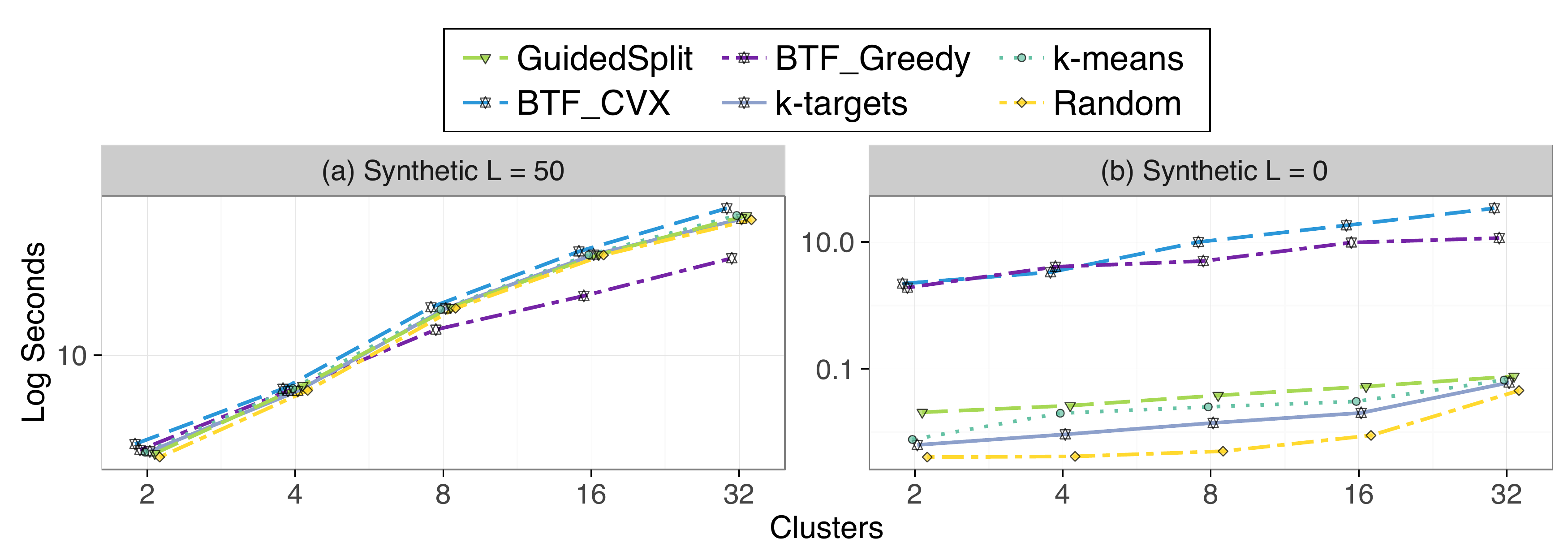} 
    \vspace{-15pt}
 \caption{Running time of algorithms with respect to number of clusters.}\label{fig:runtime_k}
   \vspace{-15pt}
\end{figure}
\vspace{-3pt}   
\subsubsection{Running Times}
%The running time of our algorithm is shown in Figure \ref{fig:runtime_n} and \ref{fig:runtime_k}; 
We only show the running time for the top six  competitive algorithms which performed well in experiments on real datasets and used the \cvx\ or the \greedy\ algorithm. We exclude \kmeansmin\ and \knn +\kmeans\ in this experiments since they do not need to use the \cvx\ or the \greedy\ algorithm  and thus their  running times are less than the other algorithms. Clearly  as illustrated in preceding experiments, the performance of \kmeansmin , and \knn +\kmeans\ are not comparable to other algorithms either. 

 We report the running time on the \synth\ dataset when increasing the number of partitions \noCluster\  where other  parameters are \noPoints\ = 500, \std\ = 0.2 \noRemove\ = 50. We also report the same experiment when \noRemove\ = 0.  Naturally when \noRemove\ = 0, algorithms run faster since they do not need to execute \cvx\ algorithm. This means the running time of \finalAlgo\ algorithm only includes of running Line \ref{algo_MBR:clustering} in Algorithm \ref{algo:targetLClustering}. We also  report the running time versus population size \noPoints\ where fixed parameters are \noCluster\ = 5,  \std\ = 0.2  and for both \noRemove\ = 0 and  \noRemove\ = 50.

  Figure \ref{fig:runtime_k}(a) shows the runtime in log seconds versus increasing the number of clusters where fixed parameters are   \noPoints\= 500 and \noRemove\ = 50, and \std\ = 0.2 . As mentioned in subsection \ref{sec:final_algo}, the running time of our \finalAlgo\ algorithm is mainly bounded by Line \ref{algo_MBR:removing_start} to Line \ref{algo_MBR:remove} in Algorithm \ref{algo:targetLClustering}, where $j$ (for $j = 1 \ldots \noRemove$) points are removed from each cluster. This also applies to \randomPartitioning, \kmeans, and \kmeanst\ since they also use the same algorithm to remove \noRemove\ points.  The running time of Line \ref{algo_MBR:removing_start} to Line \ref{algo_MBR:remove} is $time(\text{removing } \noRemove^2) * \noCluster$. In theory $time(\text{removing } \noRemove)$ is $O(n^3)$ which $n$ is the number of points in each cluster.
In consequence, as the number of clusters increases,  \targetLClustering\ needs to remove $\noRemove ^2$ points for more clusters and hence the running time also increases.  

Note that \randomPartitioning, \kmeans, and \kmeanst\ require an extra step in which after clustering is done, the center of each cluster is matched with one of the targets using Hungarian algorithm. The running of this step only depends on the number of clusters. Figure \ref{fig:runtime_k}(b) shows the running time of the algorithms when \noRemove\ = 0; hence there is no need to run \cvx\ algorithm (except for \btfCvx\ baseline) and the reported running time only includes clustering and using Hungarian algorithm to match  the cluster centers to targets. As an interesting result, the  running time of \btfCvx\ and \btfGreedy\ is considerably higher than other algorithms even though \noRemove\ = 0 and this is due the fact that for the \btfCvx\ (\btfGreedy)  algorithm we need to run the \cvx\ (\greedy) algorithm \noPoints\  times. 

Figure \ref{fig:runtime_n}(a) shows the runtime in log seconds versus increasing the number of points in the data, where \noCluster\ = 5 and \noRemove\ = 50, and \std\ = 0.2. The reason \randomPartitioning\ has the minimum running time is that it creates very balanced clustering in which the number of points in all clusters are very similar. This means when removing the points, the running time of Line \ref{algo_MBR:removing_start} to Line \ref{algo_MBR:remove} in Algorithm \ref{algo:targetLClustering} is  almost $\noCluster * (\frac{\noPoints}{\noCluster})^3$ which is considerably smaller in the case that there is one very large cluster and the rest of clusters have only a few points. On the contrary,  the opposite happens in \kmeans ; \kmeans\ creates a very large cluster including most of the points and assigns the added noises to other clusters; this is the reason \kmeans\ has the largest running time. 

On the other hand, when the means of clusters are given as the initial seed to \kmeanst , it creates more structured and hence more balanced clusters. As a result, \kmeanst\  has a smaller running time compared to \kmeans . \kmeans\ and \kmeanst\ have strange behavior for \noPoints\ = 5000. We repeated experiments many times and got the same results,  we observed for \noPoints\ = 5000, these two algorithms create highly unbalanced clusters, and this might be the reason that the running time is very high for \noPoints\ = 5000. Figure \ref{fig:runtime_k}(b) shows the running time of the algorithms when \noRemove\ = 0. As the results indicate, while our proposed algorithm is very efficient in optimizing the cost, it is also a fast algorithm and its running time is comparable to other alternatives.
\vspace{-3pt}   
 \begin{figure}[htb]
\centering
  \includegraphics[scale=0.25]{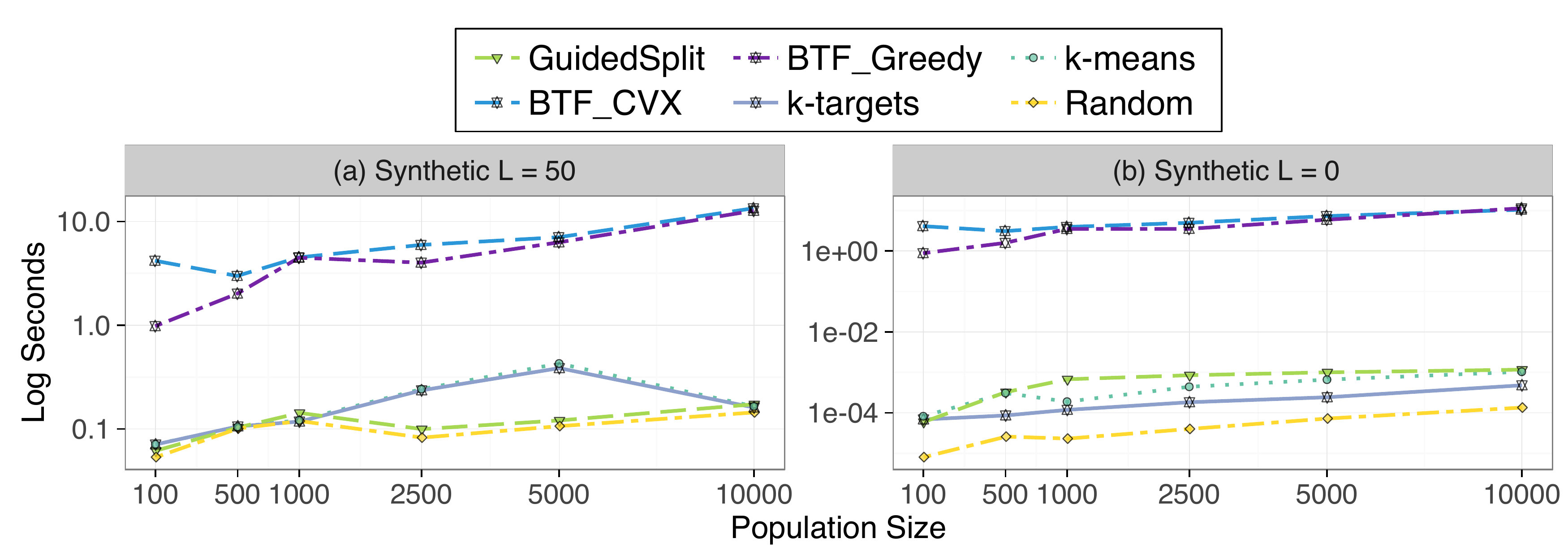} 
    \vspace{-15pt}
 \caption{Running time of algorithms with respect to Population size.}
   \vspace{-15pt}
    \label{fig:runtime_n}
\end{figure}
\vspace{-3pt}   
%\begin{figure}[htb]
%\centering
%\includegraphics[scale=0.25]{./plt/runtime_both_part1.pdf}
  %  \caption{Running time of algorithms with respect to number of clusters and Population size when \noRemove=0}
    %\label{fig:runtimePart1}
%\end{figure}

%\section{Discussion}\label{sec:discussion}
%\input{discussion}
\vspace{1ex}
\section{Conclusions}\label{sec:conclusion}
In this paper, we introduced {\targetLClustering}, a new problem that repeatedly emerges in educational and organizational settings and cannot be addressed by existing team formation algorithms. From the computational viewpoint, we studied the computational complexity of the \targetLClustering \ problem and we proved that the problem is NP-hard to solve and approximate. We also proposed novel and efficient heuristic algorithms, which we evaluated via extensive experiments on both synthetic and real-world datasets. 
The results indicate our methodology is consistently able to deliver high-quality solutions and outperform multiple competitive baselines. Our work provides a platform for future research on problems

%ACKNOWLEDGMENTS are optional

%
% The following two commands are all you need in the
% initial runs of your .tex file to
% produce the bibliography for the citations in your paper.
\bibliographystyle{abbrv}

\begin{thebibliography}{10}

\bibitem{Agrawal:2014}
R.~Agrawal, B.~Golshan, and E.~Terzi.
\newblock Grouping students in educational settings.
\newblock In {\em SIGKDD}, 2014.

\bibitem{anagnostopoulos2012online}
A.~Anagnostopoulos, L.~Becchetti, C.~Castillo, A.~Gionis, and S.~Leonardi.
\newblock Online team formation in social networks.
\newblock In {\em WWW}, 2012.

\bibitem{Arthur:2007}
D.~Arthur and S.~Vassilvitskii.
\newblock K-means++: The advantages of careful seeding.
\newblock In {\em SODA}, 2007.

\bibitem{bahargam2015personalized}
S.~Bahargam, D.~Erdos, A.~Bestavros, and E.~Terzi.
\newblock Personalized education; solving a group formation and scheduling
  problem for educational content.
\newblock In {\em EDM}, 2015.

\bibitem{DBLP:journals/corr/BahargamEBT17}
S.~Bahargam, D.~Erd{\"{o}}s, A.~Bestavros, and E.~Terzi.
\newblock Team formation for scheduling educational material in massive online
  classes.
\newblock {\em CoRR}, 2017.

\bibitem{BAHARGAM2019441}
S.~Bahargam, B.~Golshan, T.~Lappas, and E.~Terzi.
\newblock A team-formation algorithm for faultline minimization.
\newblock {\em Expert Systems with Applications}, 2019.

\bibitem{bailey2005teaching}
J.~Bailey, M.~Sass, P.~M. Swiercz, C.~Seal, and D.~C. Kayes.
\newblock Teaching with and through teams: Student-written,
  instructor-facilitated case writing and the signatory code.
\newblock {\em Journal of management education}, 2005.

\bibitem{bar2003learning}
A.~Bar-Hillel, T.~Hertz, N.~Shental, and D.~Weinshall.
\newblock Learning distance functions using equivalence relations.
\newblock In {\em ICML}, 2003.

\bibitem{Basu02semi-supervisedclustering}
S.~Basu, A.~Banerjee, and R.~Mooney.
\newblock Semi-supervised clustering by seeding.
\newblock In {\em ICML}, 2002.

\bibitem{baugh1997effects}
S.~G. Baugh and G.~B. Graen.
\newblock Effects of team gender and racial composition on perceptions of team
  performance in cross-functional teams.
\newblock {\em Group \& Organization Management}, 1997.

\bibitem{Baykasoglu:2007}
A.~Baykasoglu, T.~Dereli, and S.~Das.
\newblock Project team selection using fuzzy optimization approach.
\newblock {\em Cybern. Syst.}, 2007.

\bibitem{bhowmik2014submodularity}
A.~Bhowmik, V.~S. Borkar, D.~Garg, and M.~Pallan.
\newblock Submodularity in team formation problem.
\newblock In {\em SDM}, 2014.

\bibitem{Bilenko:2003}
M.~Bilenko and R.~J. Mooney.
\newblock Adaptive duplicate detection using learnable string similarity
  measures.
\newblock In {\em SIGKDD}, 2003.

\bibitem{Bradley00constrainedk-means}
P.~S. Bradley, K.~P. Bennett, and A.~Demiriz.
\newblock Constrained k-means clustering.
\newblock Technical report, Microsoft Research, 2000.

\bibitem{ChawlaG13}
S.~Chawla and A.~Gionis.
\newblock k-means-: A unified approach to clustering and outlier detection.
\newblock In {\em SDM}, 2013.

\bibitem{cohn2003semi}
D.~Cohn, R.~Caruana, and A.~McCallum.
\newblock Semi-supervised clustering with user feedback.
\newblock {\em Constrained Clustering: Advances in Algorithms, Theory, and
  Applications}, 2003.

\bibitem{demiriz1999semi}
A.~Demiriz, K.~P. Bennett, and M.~J. Embrechts.
\newblock Semi-supervised clustering using genetic algorithms.
\newblock {\em Artificial neural networks in engineering}, 1999.

\bibitem{edmonds1972}
J.~Edmonds and R.~M. Karp.
\newblock Theoretical improvements in algorithmic efficiency for network flow
  problems.
\newblock {\em JACM}, 1972.

\bibitem{golshan2014profit}
B.~Golshan, T.~Lappas, and E.~Terzi.
\newblock Profit-maximizing cluster hires.
\newblock In {\em SIGKDD}, 2014.

\bibitem{hendriks1999human}
M.~Hendriks, B.~Voeten, and L.~Kroep.
\newblock Human resource allocation in a multi-project r\&d environment:
  resource capacity allocation and project portfolio planning in practice.
\newblock {\em Journal of Project Management}, 1999.

\bibitem{kargar2013finding}
M.~Kargar, M.~Zihayat, and A.~An.
\newblock Finding affordable and collaborative teams from a network of experts.
\newblock In {\em SDM}, 2013.

\bibitem{klein2002instance}
D.~Klein, S.~D. Kamvar, and C.~D. Manning.
\newblock From instance-level constraints to space-level constraints: Making
  the most of prior knowledge in data clustering.
\newblock 2002.

\bibitem{lappas2012selecting}
T.~Lappas, M.~Crovella, and E.~Terzi.
\newblock Selecting a characteristic set of reviews.
\newblock In {\em SIGKDD}, 2012.

\bibitem{Lappas:2009}
T.~Lappas, K.~Liu, and E.~Terzi.
\newblock Finding a team of experts in social networks.
\newblock In {\em SIGKDD}, 2009.

\bibitem{pragarauskas2012multi}
H.~Pragarauskas and O.~Gross.
\newblock Multi-skill collaborative teams based on densest subgraphs.
\newblock {\em SDM}, 2012.

\bibitem{Rangapuram:2013}
S.~S. Rangapuram, T.~B\"{u}hler, and M.~Hein.
\newblock Towards realistic team formation in social networks based on densest
  subgraphs.
\newblock In {\em WWW}, 2013.

\bibitem{razzouk2003learning}
N.~Y. Razzouk, V.~Seitz, and E.~Rizkallah.
\newblock Learning by doing: Using experiential projects in the undergraduate
  marketing strategy course.
\newblock {\em Marketing Education Review}, 2003.

\bibitem{shaw2004fair}
J.~B. Shaw.
\newblock A fair go for all? the impact of intragroup diversity and
  diversity-management skills on student experiences and outcomes in team-based
  class projects.
\newblock {\em Journal of Management Education}, 2004.

\bibitem{Shmoys:1997:AAF}
D.~B. Shmoys, E.~Tardos, and K.~Aardal.
\newblock Approximation algorithms for facility location problems (extended
  abstract).
\newblock In {\em ACM Symposium on Theory of Computing}, 1997.

\bibitem{sobol1967distribution}
I.~M. Sobol'.
\newblock On the distribution of points in a cube and the approximate
  evaluation of integrals.
\newblock {\em Zhurnal Vychislitel'noi Matematiki i Matematicheskoi Fiziki},
  1967.

\bibitem{tang2016profit}
S.~Tang.
\newblock Profit-aware team grouping in social networks: A generalized cover
  decomposition approach.
\newblock {\em arXiv}, 2016.

\bibitem{Tung00constraint-basedclustering}
A.~K.~H. Tung, R.~T. Ng, L.~V. Lakshmanan, and J.~Han.
\newblock Constraint-based clustering in large databases, 2000.

\bibitem{wagstaff2001constrained}
K.~Wagstaff, C.~Cardie, S.~Rogers, S.~Schr{\"o}dl, et~al.
\newblock Constrained k-means clustering with background knowledge.
\newblock In {\em ICML}, 2001.

\bibitem{walter2016minimizing}
M.~Walter and J.~Zimmermann.
\newblock Minimizing average project team size given multi-skilled workers with
  heterogeneous skill levels.
\newblock {\em Computers \& Operations Research}, 2016.

\bibitem{wang2016truthful}
W.~Wang, Z.~He, P.~Shi, W.~Wu, and Y.~Jiang.
\newblock Truthful team formation for crowdsourcing in social networks.
\newblock In {\em AAMAS}, 2016.

\bibitem{Ye1989}
Y.~Ye and E.~Tse.
\newblock An extension of karmarkar's projective algorithm for convex quadratic
  programming.
\newblock {\em Mathematical Programming}, 1989.

\end{thebibliography}
{\scriptsize

}  % sigproc.bib is the name of the Bibliography in this case

% You must have a proper ".bib" file
%  and remember to run:
% latex bibtex latex latex
% to resolve all references
%
% ACM needs 'a single self-contained file'!
%
%APPENDICES are optional
%\balancecolumns

\balancecolumns
% That's all folks!
\end{document}